\newtheorem{conj}{Conjecture}[section]
\newtheorem{theorem}{Theorem}
\newtheorem{subsec:coding}{subsec:coding}
\newtheorem{lemma}{Lemma}
\newcommand{\ls}[1]  
   {\dimen0=\fontdimen6\the=#1\dimen0
    \advance\lineskip.5\fontdimen5\the\lineskip-\dimen0
    \lineskiplimit=.9\lineskip
    \baselineskip=\lineskip
    \advance\baselineskip\dimen0
    \normallineskip\lineskip
    \normallineskiplimit\lineskiplimit
    \normalbaselineskip\baselineskip
    \ignorespaces
   }
\begin{document}

\newcounter{MYtempeqncnt}
\title{Interference Alignment Improves the Capacity of OFDM Systems}


\author{Yi~Xu~\IEEEmembership{Student~Member,~IEEE},~Shiwen~Mao,~\IEEEmembership{Senior~Member,~IEEE},~and~Xin~Su~\IEEEmembership{Member,~IEEE}%
\thanks{
This work was presented in part at IEEE GLOBECOM 2012, Anaheim, CA, December 2012~\cite{Xu2012}.}
\thanks{Y. Xu and S. Mao are with the Department of Electrical and Computer Engineering, Auburn University, Auburn, AL 36849-5201, USA. X. Su is with the Research Institute of Information Technology, Tsinghua University, Beijing, P. R. China, 100084. Email: yzx0010@tigermail.auburn.edu, smao@ieee.org, suxin@tsinghua.edu.cn.
}
\thanks{{\bf Corresponding author}: Shiwen Mao. Email: smao@ieee.org, Tel: (334)844-1845, Fax: (334)844-1809. }
}

\maketitle


\begin{abstract}
Multi-user Orthogonal Frequency Division Multiplexing (OFDM) and Multiple Output Multiple Output (MIMO) have been widely adopted to enhance the system throughput and combat the detrimental effects of wireless channels. Recently, interference alignment was proposed to exploit interference to enable concurrent transmissions of multiple signals. In this paper, we investigate how to combine these techniques to further enhance the system throughput. We first reveal the unique characteristics and challenges brought about by using interference alignment in diagonal channels. We then derive a performance bound for the multi-user (MIMO) OFDM/interference alignment system under practical constraints, and show how to achieve this bound with a decomposition approach. The superior performance of the proposed scheme is validated with simulations. 
\end{abstract}

\begin{keywords} 
Interference alignment; Multiple Input and Multiple Output (MIMO); Orthogonal Frequency Division Multiplexing (OFDM); Multi-User OFDM.
\end{keywords}

\pagestyle{headings}\thispagestyle{headings}

\maketitle

\section{Introduction} \label{sec:intro}


The past decade has witnessed drastic increase of wireless data traffic, largely due to the so-called ``smartphone revolution.''
As wireless data traffic is explosively increasing, the capacity of existing and future wireless networks will be greatly stressed. 
Many advanced wireless communication technologies, such as Orthogonal Frequency Division Multiplexing (OFDM) and Multiple Input Multiple Output (MIMO), are widely adopted to enhance the system capacity,  
while a huge amount of wireless access networks/base stations (BS) are deployed every year to accommodate the compelling need for larger capacity. Given the increasing wireless data volume and the more and more crowded BS deployment, interference is becoming the major factor that limits wireless network performance. 


Traditionally, interference is considered harmful and often treated as background noise. As the performance of point-to-point transmission techniques is approaching Shannon capacity, there is now considerable interest on exploiting interference for further capacity gains. It is shown that when interference is large, it can be decoded and canceled from the mixed signal (as in interference cancellation), while when interference is comparable, interference alignment can be adopted to enable concurrent transmissions.  Although interference is harmful in many cases, it could be beneficial for enhancing system throughput as long as the interference can be aligned.  We call this kind of interference beneficial interference. 

Interference alignment was first proposed in~\cite{Cadambe2008}, and the feasibility condition was investigated in~\cite{Yetis2010}.
Since in a large network, there are many users but limited dimensionality, the authors in~\cite{Shen2011} proposed the concept of ``best-effort'' interference alignment, and adopted an iterative algorithm to optimize it. However, how to use interference alignment to enhance the throughput in practical OFDM system was not fully considered. Shi et al. in~\cite{Shi2011} also considered the problem of interference alignment in multi-carrier interference networks. But it is not clear if the approach can be extended to the general case of a large number of subcarriers. 
In~\cite{Kafedziski13}, the authors proposed two schemes to adopt interference alignment in multi-cell MIMO OFDM systems. In the first scheme, interference alignment was used to remove the inter-cell interference, while zero-forcing precoding was used to suppress the intra-cell interference. In the second scheme, interference alignment was also used for inter-cell interference removal, while the OFDMA access scheme was applied for intra-cell interference cancellation. However, the fundamental performance bound of multi-user MIMO OFDM system with interference alignment has not been discussed. In~\cite{Zhangletter14}, the authors derived the necessary and sufficient conditions for the three-user OFDM system with interference alignment in the time domain. However, these conditions cannot be applied to system with more users or under other conditions. 
In \cite{Kerret} system with incomplete channel state information is considered. But in this paper we focus on the case where channel state information is complete and perfectly known at the transmitter.
Ayach et al. in~\cite{rhealth2010} investigated the feasibility problem MIMO-OFDM system with interference alignment over measured channels, while in this paper, we mainly concern about the theoretical bound when interference alignment is incorporated in the OFDM system. 

Interference alignment also finds many applications in practical wireless networks. 
In~\cite{debahtvt13}, a cognitive interference alignment scheme was presented to suppress both cross-tier and co-tier interferences in OFDM-based two-tier networks.
Interference alignment with limited feedback was discussed in \cite{Gao14} \cite{Kuchi}.
Multi-cell opportunistic interference alignment wass investigated in \cite{Leithon12}.
Authors in \cite{Castanheira} considered applying interference alignment to the HetNet (Heterogeneous Network) where both macrocell and small cell coexist.
In~\cite{yiiccn13} \cite{yitvt13}, the authors investigated the behaviors of primary users and secondary users under a Stackelberg game theory framework, where distributed interference alignment is adopted to enable spectrum leasing in the cognitive radio network. 
To achieve better error rate performance, a novel interference alignment based precoder design was presented in~\cite{xianggen12} for OFDM system.

There are also some existing studies that aim to adopt interference alignment in more advanced systems. 
In~\cite{Li2010}, the authors extended the traditional interference alignment scheme to a general algorithm for multi-hop mesh networks. The authors in~\cite{Gollakota2009} considered combining interference alignment and interference cancellation to further enhance the system throughput. In~\cite{Hadidy12}, the authors proposed to use multimode MIMO antennas instead of the typical omni-directional antennas to improve the performance of MIMO OFDM system with interference alignment, while in~\cite{Dimitrov2012}, the impact of antenna spatial correlation on the performance of interference alignment systems was investigated.

As claimed in~\cite{jafartut}, there are not many studies about interference alignment with structured channels. In~\cite{Xu2012}, the authors aimed to show how interference alignment works in OFDM system under practical constraints. To further address this problem, here in this paper, we consider the problem of incorporating interference alignment in multi-user (MIMO) OFDM systems. Specifically, we first examine the fundamental characteristics and practical constraints on adopting interference alignment in a multi-user OFDM system. We show that, for a $K$ user $N$ subcarrier OFDM system, $KN/2$ concurrent transmissions that is achievable for generic structureless channels~\cite{Cadambe2008}, cannot be achieved for a practical multi-user OFDM network with diagonal channels and a limited number of subcarriers. We then investigate effective schemes to exploit interference in multi-user OFDM systems. With an integer programming problem formulation, we derive the maximum efficiency of the Multi-user OFDM/interference alignment system. We also show how to achieve the maximum efficiency with a decomposition approach, and derive the closed-form precoding and decoding matrices. 
Finally, we extend the above analysis to the multiple antennas scenarios.
All the proposed schemes are evaluated with simulations and their superior performance is validated.  


\smallskip
\noindent
\underline{Notation}: in this paper, a capital bold symbol like $\textbf{H}$ denotes a matrix, a lower case symbol with an arrow on top like $\vec{v}$ 
denotes a vector, and a lower case letter like $v$ denotes a scalar. 
$[\cdot]^T$ means {\em transpose} and $[\cdot]^{-1}$ means {\em inversion}. 
$\textbf{H}_{ij}$ and $h_{ij}$ are the channel gain matrix and channel gain from the $i$-th transmitter to the $j$-th receiver, respectively. $\textbf{V}_i$ is the precoding matrix for transmitter $i$; 
$\vec{v}^j_i$ is the $j$-th column of $\textbf{V}_i$. $\textbf{U}_i$ denotes the interference cancellation matrix for the $i$-th receiver, while 
$\vec{u}^j_i$ is the $j$-th column of $\textbf{U}_i$. Let $h$, $v$, $u$ denote the entries of $\textbf{H}$, $\textbf{V}$, and $\textbf{U}$, respectively. 

Note that with these notations, the entries of $\textbf{H}_{ij}$ takes slightly different ordering from conventional ones. For instance, if transmitter $1$ and receiver $2$ are both equipped with $M$ antennas, the channel gain is:
\begin{equation} \label{eq:cgain}
\textbf{H}_{12}=
\begin{pmatrix}
h_{11} & h_{21} & \cdots & h_{M1} \\
h_{12} & h_{22} & \cdots & h_{M2} \\
\vdots & \vdots & \ddots & \vdots \\
h_{1M} & h_{2M} & \cdots & h_{MM}
\end{pmatrix}.
\end{equation}

The rest of this paper is organized as follows. Section~\ref{sec:bac} describes the background and preliminaries. 
Section~\ref{sec:muofdmia} investigates how to adopt interference alignment in multi-user OFDM system. 
Section~\ref{subsec:mumowia} extends the analysis to the multiple antennas scenario.
Simulation results are presented in Section~\ref{sec:simulation}. 
Section~\ref{sec:concl} concludes the paper. 

\section{Background and Preliminaries} \label{sec:bac}

\subsection{Orthogonal Frequency Division Multiplexing} \label{subsec:ofdm}
While higher data rates can be achieved
by reducing symbol duration, severe inter-symbol-interferences (ISI) will be caused over time dispersive channels. 
OFDM is an effective approach to allow transmissions at a high data rate  and combat the destructive effect of channel. By dividing the channel into narrow bands, in which the signal experiences flat fading, OFDM can effectively mitigate ISI and maintain high data rate transmissions. Interested reader are referred to~\cite{Hwang2009} and the references therein for details. 

\subsection{Multiple Input and Multiple Output} \label{subsec:mimo}
With the single antenna transmission technique being well developed, it is natural to extend to multiple antenna systems. 
The MIMO transmission techniques have been evolving rapidly since last decades. Generally speaking, multiple antennas or an antenna array can be used to attain the {\em diversity gain}, {\em multiplexing gain}, or {\em antenna gain}, and thereby reduce the system error rate, enhance the system throughput, or strengthen the signal to interference and noise ratio (SINR)~\cite{Mietzner2009}. 
Given $M_1$ transmitting antennas and $M_2$ receiving antennas, the maximum multiplexing gain is known to be $\min\{M_1,M_2\}$.
Throughout this paper, we assume that channel state information is perfectly known at each transmitter and receiver as in prior works~\cite{Cadambe2008}. 


\begin{figure} [!t] 
\center{\includegraphics[width=3.4in]{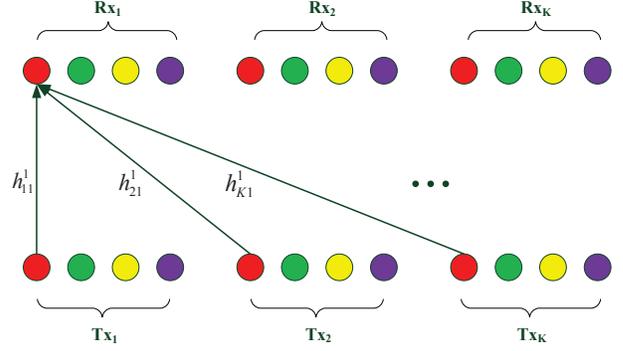}}
\caption{Multi-user OFDM using interference alignment.} 
\label{fig:mimoofdm1}
\end{figure}

\subsection{Interference Alignment} \label{subsec:ia}
It is shown in~\cite{Cadambe2008} that in a $K$ user wireless network, with $(n+1)^q+n^q$ symbol extensions, totally $K/2$ normalized {\em degrees of freedom} (DoF) can be achieved using interference alignment, where $q=(K-1)(K-2)-1$ and $n \in  \mathbb{N}$. In single antenna systems, the normalized DoF is 1. With interference alignment, the system throughput is enhanced by a factor of $K/2$ for $K \geq 2$. Note that there is no interference if there is only one user occupying the time or frequency resource. 

{\em Observation 1}:
The system throughput could be improved if alignable interference is introduced among users. 

This observation is useful for OFDM systems, where the channel gain matrix is diagonal. Since the gain of interference alignment is 
proportional to $K$, we should have more users transmit at the same time slot or frequency band if the transmitted vectors can be aligned. That is why we call this kind of interference {\em beneficial interference} in this paper.


\section{Multi-user OFDM with Interference Alignment} \label{sec:muofdmia}

In this section, we investigate the problem of interference alignment in multi-user OFDM systems. 
The system model is illustrated in Fig.~\ref{fig:mimoofdm1}.
We first examine fundamental characteristics and practical constraints, and then demonstrate how to exploit interference in multi-user OFDM systems. We derive the maximum throughput when interference alignment is adopted, as well as closed-form precoding and decoding matrices to achieve the maximum throughput. 

\subsection{Subcarriers versus Antennas} \label{subsec:rgdantur}

In traditional interference alignment, deploying multiple transmitting antennas allows us to precode data packets and align them at the receiver. Deploying multiple receiving antennas provides multidimensional signal space, so that interference can be aligned into a 
sub-signal space that is orthogonal to the desired signal. 
Therefore, deploying multiple antennas can provide the needed freedom in the signal space. 

In OFDM systems, we observe that subcarriers can function in similar ways as antennas in MIMO/interference alignment systems, since subcarriers could also provide multidimensional signal space.
To some extent, subcarriers can be regarded as a counterpart of antennas. So we could compress the interference at each receiver in no more than half of the subcarriers, and leave the other half subcarriers free from interference.

However, note that there is a distinguishing difference between the two systems: 
there is no cross-talk among different subcarriers in OFDM. 


\subsection{Precoding in OFDM} \label{subsec:precd}

The main idea of interference alignment is to compress the interference space to no more than half of the total received signal space at each receiver, leaving the remaining part of the space for desired signals~\cite{Cadambe2008}. 
This goal is achieved through precoding at every transmitter and zero forcing interference cancellation at every receiver.

In OFDM systems, data is transmitted over multiple carriers between transmitters and receivers, as shown in Fig.~\ref{fig:mimoofdm1}. Since in OFDM systems, subcarriers can also provide multidimensional signal space for the transmitter and receiver as multiple antennas, we could precode over multiple subcarriers to achieve interference alignment for OFDM system as multiple antennas for MIMO system. Suppose there are $N$ subcarriers. Ignoring noise, if there is no precoding, the received signal for each receiver $\vec{y}$ is an $N \times 1$ vector given by:
\begin{eqnarray}
\vec{y}=\textbf{H}\vec{x}, \label{eq:ofdm1}
\end{eqnarray}
where $\vec{x}$ is the desired signal in the form of an $N \times 1$ vector, and $\textbf{H}$ is the $N\times N$ channel gain matrix between the transmitter and receiver. 
Since different subcarriers have different frequencies, the channel gain matrix is {\em diagonal} if there is no severe frequency shift. It can be seen from later discussions that this property makes interference alignment in OFDM system quite different from the general channel case. 

Going one step further, we can precode the data before transmission. 
If $d$ packets are to be transmitted in an $N$ subcarrier OFDM system, an $N \times d$ {\em precoding matrix} $\textbf{V}$ could be used. The system equation is rewritten as follows.
\begin{eqnarray}
\vec{y}=\textbf{HV}\vec{x}. \label{eq:ofdm2}
\end{eqnarray}
If we let $d=N$ and $\textbf{V} = \textbf{I}_N$, where $\textbf{I}_N$ is an $N \times N$ identity matrix,~(\ref{eq:ofdm2}) is reduced to~(\ref{eq:ofdm1}). 

In general, we could control what to be transmitted on the subcarriers by adjusting the precoding matrix accordingly. For a single user single antenna OFDM system with $N$ subcarriers, the maximum number of packets can be transmitted is $N$. Note that, here $N$ is normalized by the QAM (Quadrature Amplitude Modulation) modulation level. However, inspired by the idea of interference alignment, we show that a throughput higher than $N$ can be achieved in the following subsections.

\subsection{Interference Alignment in a $K$-User OFDM System} \label{subsec:mmmimoofdm}

As discussed, we consider the problem of interference alignment in multiuser OFDM systems. Basically, we aim to 
answer the following questions.
\begin{enumerate}
	\item[(i)] What are the practical constraints for adopting interference alignment in such systems?
	\item[(ii)] What is the maximum throughput that can be achieved? 
	\item[(iii)] How to achieve the maximum throughput (i.e., deriving closed-form precoding and decoding matrices)?
\end{enumerate}

\smallskip
\subsubsection{Dependence of Precoding and Decoding Vectors in Diagonal Channels} \label{subsubsec:depofprede}

In this section, we show the difference on applying interference alignment between a diagonal channel and a general channel, as well as the challenges to adopt interference alignment in the former case. 

It was shown in~\cite{Yetis2010} that given $M_1$ transmitting antennas and $M_2$ receiving antennas in a $K$ user interference channel, the DoF for each user, denoted by $d$, must satisfy
\begin{eqnarray}
d \leq \frac{M_1+M_2}{K+1}. \label{eq:muofdmia97}
\end{eqnarray}
For example, given two transmitting and receiving antennas in a three-user interference channel,~(\ref{eq:muofdmia97}) indicates that each user could transmit one packet simultaneously. With a generic structureless channel, the throughput $Kd=3$ can be achieved 
as follows. 

At each receiver, we align the signals from the other two users. Recall the channel gain matrices as defined in~(\ref{eq:cgain}) and let the user $i$ signal be $\vec{v}_i$, $i=1, 2, 3$. It follows that
\begin{eqnarray}
\textbf{H}_{21}\vec{v}_2&=&\textbf{H}_{31}\vec{v}_3 \label{eq:muofdmia98} \\
\textbf{H}_{12}\vec{v}_1&=&\textbf{H}_{32}\vec{v}_3 \label{eq:muofdmia99} \\
\textbf{H}_{13}\vec{v}_1&=&\textbf{H}_{23}\vec{v}_2. \label{eq:muofdmia100}
\end{eqnarray}
Solving (\ref{eq:muofdmia98}), (\ref{eq:muofdmia99}) and (\ref{eq:muofdmia100}), we have
\begin{eqnarray}
\vec{v}_1&=&\mbox{eig}(\textbf{H}^{-1}_{12}\textbf{H}_{32}\textbf{H}^{-1}_{31}\textbf{H}_{21}\textbf{H}^{-1}_{23}\textbf{H}_{13}) \label{eq:muofdmia88}\\
\vec{v}_2&=&\textbf{H}^{-1}_{23}\textbf{H}_{13}\vec{v}_1 \label{eq:muofdmia87}\\
\vec{v}_3&=&\textbf{H}^{-1}_{32}\textbf{H}_{12}\vec{v}_1, \label{eq:muofdmia86}
\end{eqnarray}
where $\mbox{eig}(\textbf{A})$ stands for the eigenvector of matrix $\textbf{A}$. 

This scheme works well for generic structureless channels, 
but not for the case of diagonal channels. For instance, if 2 subcarriers (instead of two antennas) are used in OFDM, all the channel gain matrices in (\ref{eq:muofdmia88}), (\ref{eq:muofdmia87}) and (\ref{eq:muofdmia86})
are diagonal. Since the product of diagonal matrices is still diagonal, we have from (\ref{eq:muofdmia88}) that
$$\vec{v}_1=
\begin{pmatrix}
1 \\
0 
\end{pmatrix}
\; \mbox{or} \;
\begin{pmatrix}
0 \\
1 
\end{pmatrix}.
$$
If $\vec{v}_1=[1, 0]^T$, 
we derive $\vec{v}_2=[c_1, 0]^T$ from (\ref{eq:muofdmia87}) and $\vec{v}_3=[c_2, 0]^T$ from (\ref{eq:muofdmia86}), where $c_1$ and $c_2$ are scalars. To cancel the interference at receiver $1$, the {\em cancellation vector} $\vec{u}_1$ must be 
$\vec{u}_1=[0, c]^T$, where $c$ is also a scalar. 
However, the desired packet is also canceled since $\vec{u}_1$ is orthogonal to $\vec{v}_1$. 
Therefore, we cannot simultaneously transmit $3$ packets in this system.

The reason behind is that for a diagonal channel, its eigenvectors have only one nonzero entry. 
If we align interferences at receiver $r$ by letting $\textbf{H}_{jr}\vec{v}_j = \cdots =\textbf{H}_{ir}\vec{v}_i$, for $j \neq \cdots \neq i \neq r$, the precoding vectors are dependent to each other. Consequently, when interference is canceled at a receiver, the desired packet will also be canceled.

\smallskip
\subsubsection{Interference Alignment with Multi-user OFDM--Performance Bound} \label{subsubsec:uadiagnal}


It is shown in~\cite{Cadambe2008} that in a $K$ user system with $(n+1)^q+n^q$ symbol extensions, totally $K/2$ normalized 
DoF can be achieved using interference alignment, where $q=(K-1)(K-2)-1$ and $n \in \mathbb{N}$. In light of this result, one may think that $KN/2$ concurrent transmissions is achievable in a $K$-User, $N$ subcarrier OFDM system. However, we will show that this is unachievable for large $K$ in practical systems in the following.

It is worth noting that 
an assumption made in~\cite{Cadambe2008} is that the symbol extensions can be infinitely large. 
This assumption may not hold true in practical systems. Given a finite bandwidth, the number of subcarriers is the bandwidth divided by the subcarrier spacing. Typically, the value of subcarrier spacing is $10-20$ KHz. Then even for a 100 MHz bandwidth, we can have at most $10^4$ subcarriers. For instance, in 802.16m and LTE, the maximum number of IFFT is 2,048, and maximum number of effective subcarriers is 1,200. 

Therefore, the problem is to maximize system throughput given a finite number of subcarriers, 
denoted by $N_{max}$. It is shown in~\cite{Cadambe2008} that with $(n+1)^q+n^q$ symbol extensions, the total normalized DoF is
$[(n+1)^q+(K-1)n^q] / [(n+1)^q+n^q]$. So we aim to maximize $(n+1)^q+(K-1)n^q$ and have the following formulation.
\begin{eqnarray}
\max_{n,K} && (n+1)^q+(K-1)n^q \label{eq:mimoofdmmax} \\
      s.t. && q=(K-1)(K-2)-1  \label{eq:constraint1} \\
           && (n+1)^q+n^q \leq N_{max}, n \in \mathbb{N} \label{eq:constraint2} \\
           && K \geq 3, K \in \mathbb{N}. \label{eq:constraint3} 
\end{eqnarray}
The physical meaning of problem (\ref{eq:mimoofdmmax}) is that we try to maximize the unnormalized DoF given finite number of subcarriers. Note that all the variables are integers. 
Constraint~(\ref{eq:constraint2}) indicates that for practical OFDM systems, the number of subcarriers $N=(n+1)^q+n^q$ is upper bounded by $N_{max}$. Although this integer programming problem is NP-hard in general, by careful inspection, we can find the solution under practical constraints. 

In particular, we find the feasible region is very small for practical $N_{max}$ values. Also the objective value is monotone with respect to the two variables $n$ and $K$. 
In  problem (\ref{eq:mimoofdmmax}), assuming $K=5$, we have $q=11$ from (\ref{eq:constraint1}). For each value of $n$, we can derive the number of subcarriers needed, $N_{max}$, from (\ref{eq:constraint2}) for the problem to be feasible, as well as the throughput of the system (i.e., the objective value of (\ref{eq:mimoofdmmax})). The corresponding degree of freedom, $d$, is the ratio of the throughput and the number of subcarriers required. These numbers are presented in Table~\ref{tab:efficiencies}.

Table~\ref{tab:efficiencies} shows that 
if there are $K=5$ users, 
$2,049$ and $179,195$ subcarriers are needed when $n=1$ and $n=2$, respectively. As discussed, a practical system usually do not have more than $10^4$ subcarriers. So $n$ can only be $1$ in this case, with efficiency $d_{max} = 1.002$. Therefore, interference alignment is not useful in this case, since we can simply allow only one user to transmit over one time-slot or a particular frequency band to get $d = 1$ (i.e., single user OFDM).

If there are $K=6$ transmitters, we have $q=19$. 
Even if $n=1$, the number of subcarriers needed is $524,289$,
which is not feasible for practical systems. 
Since the number of subcarriers $(n+1)^{(K-1)(K-2)-1}+n^{(K-1)(K-2)-1}$ grows exponentially with $(K^2-3K+1)$, 
it can be readily concluded that $K$ cannot be more than 4 for interference alignment to be beneficial in multi-user OFDM systems. 

Since the 
objective value of (\ref{eq:mimoofdmmax}) 
is 
an monotone increasing function of $K$, the maximum feasible value $K=4$ is of particular interest. 
We have $q=5$ when $K=4$. 
Table~\ref{tab:efficiencies} also shows that under this condition, the maximum efficiency for practical system is 
$d_{max}=1.38$ for the practical case with at most $2,000$ subcarriers. 
When $K=3$, we have $q=1$. The objective function (\ref{eq:mimoofdmmax}) becomes $3n+1$, and the constraint (\ref{eq:constraint2}) becomes $2n+1 \leq N_{max}$. If the maximum number of subcarriers is $N_{max}=2,001$, the system achieves its maximum efficiency $d_{max}=1.4998$. 

The above analysis can be summarized as follows.

\begin{table}
\begin{center}
\caption{System Efficiency}
\label{tab:efficiencies}
\begin{tabular}{r|r|r|r}
\hline\hline
\multicolumn{4}{c}{When $K=5$ and $q=11$} \\ \hline
$n$ & \textit{No. of subcarriers} & \textit{No. of packets} & \textit{Normalized DoF $d$} \\
\hline
1 & 2,049 & 2,052 & 1.002\\
2 & 179,195 & 185,339 & 1.03\\
\hline\hline
\multicolumn{4}{c}{When $K=4$ and $q=5$} \\ \hline
$n$ & \textit{No. of subcarriers} & \textit{No. of packets} & \textit{Normalized DoF $d$} \\
\hline
1 & 33 & 35 & 1.06\\
2 & 275 & 339 & 1.23\\
3 & 1,267 & 1,753 & 1.38\\
4 & 4,149 & 6,197 & 1.49\\
\hline\hline
\multicolumn{4}{c}{When $K=3$ and $q=1$} \\ \hline
$n$ & \textit{No. of subcarriers} & \textit{No. of packets} & \textit{Normalized DoF $d$} \\
\hline
1 & 3 & 4 & 1.333\\
2 & 5 & 7 & 1.40\\
3 & 7 & 10 & 1.429\\
4 & 9 & 13 & 1.444\\
100 & 201 & 301& 1.498 \\
1000 & 2001 & 3001& 1.4998 \\
\hline\hline
\end{tabular}
\end{center}
\end{table}

\smallskip 
\begin{conj} \label{cj:1}
For a practical multi-user OFDM system with number of subcarriers less than $2,002$, the maximum efficiency is $d_{max}=1.4998$, which is achieved when there are $K=3$ users using $N=2,001$ subcarriers. 
\end{conj}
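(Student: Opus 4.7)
The plan is to solve the integer program (\ref{eq:mimoofdmmax})--(\ref{eq:constraint3}) by direct case analysis on $K$, exploiting the exponential growth of $(n+1)^q + n^q$ in $q$ (and hence in $K$) to eliminate all but a handful of $(K,n)$ pairs. First I would note that for each fixed $K$, the objective $(n+1)^q + (K-1)n^q$ and the subcarrier count $(n+1)^q + n^q$ are both monotone increasing in $n \in \mathbb{N}$, so the optimal $n$ for a given $K$ is the largest $n$ satisfying $(n+1)^q + n^q \leq N_{\max} = 2001$.

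Next I would rule out large $K$. For $K \geq 6$ we have $q = (K-1)(K-2)-1 \geq 19$, and even the smallest choice $n=1$ yields $2^{19}+1 = 524{,}289 > 2001$, so the feasible set is empty. For $K=5$, $q=11$ and $n=1$ already requires $2^{11}+1 = 2049 > 2001$ subcarriers, so this case is likewise infeasible. Only $K \in \{3,4\}$ remain.

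For $K=4$ we have $q=5$; the entries of Table~\ref{tab:efficiencies} show that the largest admissible $n$ is $n=3$ (needing $4^5 + 3^5 = 1267 \leq 2001$, while $n=4$ already needs $4149$), giving normalized DoF $1753/1267 \approx 1.38$. For $K=3$ we have $q=1$, so the constraint simplifies to $2n+1 \leq 2001$, i.e.\ $n \leq 1000$, and the objective reduces to $3n+1$. Choosing $n=1000$ yields exactly $N = 2001$ subcarriers and $3001$ packets, hence $d = 3001/2001 = 1.4998\ldots$. Since $1.4998 > 1.38 > 1$, the $K=3$ configuration strictly dominates every other feasible choice, proving both the bound and the attaining configuration claimed in Conjecture~\ref{cj:1}.

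The main obstacle is essentially bookkeeping rather than genuine difficulty: one must confirm that no subtle corner of the lattice has been missed when applying the monotonicity argument, and that the achievability scheme of~\cite{Cadambe2008} indeed instantiates cleanly at $(K,n) = (3,1000)$ within the OFDM diagonal-channel setting discussed in Section~\ref{subsubsec:depofprede}. The latter is immediate because $K=3$ avoids the degeneracy illustrated there (the eigenvector collapse only becomes restrictive when too few dimensions are available), while monotonicity in $n$ for the function $n \mapsto ((n+1)^q + (K-1)n^q)/((n+1)^q + n^q)$ follows from a short derivative check.
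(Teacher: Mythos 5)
Your proposal is correct and follows essentially the same route as the paper: you solve the integer program (\ref{eq:mimoofdmmax})--(\ref{eq:constraint3}) by case analysis on $K$, using the exponential growth of $(n+1)^q+n^q$ to eliminate $K\geq 5$ under $N_{\max}=2001$, reading off $K=4,n=3$ (giving $1753/1267\approx 1.38$) and $K=3,n=1000$ (giving $3001/2001=1.4998$), exactly as in Table~\ref{tab:efficiencies} and the surrounding discussion. The only addition worth noting is that the paper itself immediately cautions that this conjecture is later overturned by the outage analysis of Section~\ref{subsubsec:another} (Theorem~\ref{th:1s}), so your closing claim that achievability at $(K,n)=(3,1000)$ ``instantiates cleanly'' holds only for the idealized model without the per-subcarrier power-spread constraint.
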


However, in the later discussions, we will show that this conjecture does not hold true.

\subsubsection{Interference Alignment with Multi-user OFDM--Realization} \label{subsubsec:wecannot}

It is shown in~\cite{Cadambe2008} how to design the precoding matrices to transmit $3n+1$ packets over $2n+1$ symbol extensions in a three-user interference channel (i.e., for a three-user system, we have $q=1$ and $N=(n+1)^q+n^q=2n+1$). We will derive the precoding/decoding procedure for interference alignment with multi-user OFDM and prove its efficacy in this section.  

The precoding matrices proposed in~\cite{Cadambe2008} for the case of three users are as follows.
\begin{eqnarray}
&& \textbf{V}_1=\textbf{A}  \label{eq:origin1} \\
&& \textbf{V}_2=\textbf{H}_{23}^{-1} \textbf{H}_{13} \textbf{C} \label{eq:origin2} \\ 
&& \textbf{V}_3=\textbf{H}_{32}^{-1} \textbf{H}_{12} \textbf{B}, \label{eq:origin3}
\end{eqnarray}
where
\begin{eqnarray}
&& \textbf{A}=[\vec{w}~\textbf{T}\vec{w}~\textbf{T}^2\vec{w}~\cdots~\textbf{T}^n\vec{w}] \label{eq:origin4} \\ 
&& \textbf{B}=[\textbf{T}\vec{w}~\textbf{T}^2\vec{w}~\cdots~\textbf{T}^n\vec{w}] \label{eq:origin5} \\ 
&& \textbf{C}=[\vec{w}~\textbf{T}\vec{w}~\textbf{T}^2\vec{w}~\cdots~\textbf{T}^{n-1}\vec{w}] \label{eq:origin6} \\ 
&& \textbf{T}=\textbf{H}_{21} \textbf{H}_{12}^{-1} \textbf{H}_{32} \textbf{H}_{23}^{-1} \textbf{H}_{13} \textbf{H}_{31}^{-1} \label{eq:origin7} \\ 
&& \vec{w}=[1~1~\cdots~1]^T. \label{eq:origin8}
\end{eqnarray}
Thus, the received signal at receiver $1$ is:
\begin{eqnarray} \label{eq:muofdmia68}
\vec{y}_1 =  \textbf{H}_{11}\textbf{V}_1\vec{x}_1+\textbf{H}_{21}\textbf{V}_2\vec{x}_2+\textbf{H}_{31}\textbf{V}_3\vec{x}_3. 
\end{eqnarray}

In the general case, since the data streams are independent of each other, the received mixed signal spans $3n+1$ dimensions of the space. 
In interference alignment with multi-user OFDM, the received signal spans only $2n+1$ dimensions of space. Solving these $2n+1$ equations will yield the desired packets. However, the challenge is, if $2n+1$ is too large, we may not be able to solve these equations efficiently (as can be seen from the later discussions). This problem can be addressed with a decomposition approach as given in the following theorem.

\smallskip 
\begin{theorem} \label{th:2}
For an $N$ subcarrier OFDM system, we can divide
the subcarriers into $\lfloor N/(2n+1) \rfloor$ groups, where $n \in \mathbb{N}$, and precode and decode the groups separately to achieve the interference alignment gain.
\end{theorem}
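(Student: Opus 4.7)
The plan is to exploit the defining feature of the diagonal OFDM channel that was emphasized in Section~\ref{subsec:precd}: since each $\textbf{H}_{ij}$ is diagonal, there is no cross-talk between distinct subcarriers. In particular, if we place nonzero precoded energy only on a chosen subset $S \subseteq \{1,\ldots,N\}$ of subcarrier indices, then at every receiver the resulting signal contribution appears only on the subcarriers indexed by $S$. This suggests a clean decoupling strategy: partition the $N$ subcarriers into $\lfloor N/(2n+1) \rfloor$ disjoint groups of size $2n+1$ each (ignoring any leftover subcarriers), and apply the Cadambe--Jafar construction of~(\ref{eq:origin1})--(\ref{eq:origin8}) independently inside each group.

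Concretely, I would proceed in three steps. First, for each group $g$ with subcarrier indices $S_g$, form the $(2n+1)\times(2n+1)$ principal sub-matrices $\textbf{H}_{ij}^{(g)}$ by selecting the rows and columns of $\textbf{H}_{ij}$ indexed by $S_g$; these sub-matrices inherit diagonality and, with probability one, invertibility, since their diagonal entries are just the original channel gains on the chosen subcarriers. Second, plug $\textbf{H}_{ij}^{(g)}$ into the formulas~(\ref{eq:origin4})--(\ref{eq:origin8}) to obtain group-specific precoders $\textbf{V}_1^{(g)}, \textbf{V}_2^{(g)}, \textbf{V}_3^{(g)}$ of size $(2n+1)\times(3n+1)$, and then zero-pad them into full $N\times(3n+1)$ matrices whose only nonzero rows lie in $S_g$. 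Third, argue that the received signal at each receiver decomposes as a direct sum over groups: because the full channel matrices are diagonal, the contribution from group $g'$ lands entirely on subcarrier indices $S_{g'}$, which are disjoint from $S_g$ whenever $g'\neq g$, so no inter-group interference is produced.

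Within each group, the Cadambe--Jafar analysis from~\cite{Cadambe2008} then guarantees that the interference from the other two users is aligned into an $n$-dimensional subspace of the $(2n+1)$-dimensional signal space, leaving the desired $n+1$ packets for user~$1$ and $n$ packets for users~$2$ and~$3$ recoverable by zero-forcing, exactly as in~(\ref{eq:muofdmia68}). Summing over all $\lfloor N/(2n+1) \rfloor$ groups then yields $\lfloor N/(2n+1) \rfloor\cdot(3n+1)$ packets on $N$ subcarriers, which realizes the interference-alignment gain per subcarrier up to the floor loss, and crucially reduces the decoding problem to solving $(2n+1)\times(2n+1)$ systems rather than one $N\times N$ system.

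The hard part will be confirming that the Cadambe--Jafar scheme does not degenerate on a block of diagonal sub-matrices the way naive alignment did in Section~\ref{subsubsec:depofprede}. The obstacle in that earlier example was that eigenvectors of diagonal matrices have only a single nonzero entry; here, however, the precoding matrices are built from monomials in $\textbf{T}^{(g)} = \textbf{H}_{21}^{(g)}(\textbf{H}_{12}^{(g)})^{-1}\textbf{H}_{32}^{(g)}(\textbf{H}_{23}^{(g)})^{-1}\textbf{H}_{13}^{(g)}(\textbf{H}_{31}^{(g)})^{-1}$ applied to the all-ones vector $\vec{w}$, and by a standard Vandermonde argument they span an $(n+1)$-dimensional subspace provided $\textbf{T}^{(g)}$ has $2n+1$ distinct diagonal entries. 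The substantive step is therefore to verify this genericity condition for the restricted diagonal sub-matrices, which holds almost surely because the frequency-selective fading gains on distinct subcarriers are distinct with probability one.
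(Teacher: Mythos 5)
Your proposal is correct and follows essentially the same route as the paper's proof: partition the subcarriers into disjoint groups of $2n+1$, use the diagonality of the OFDM channel to show the groups do not interact (the paper does this by writing $\textbf{V}$ and $\textbf{H}\textbf{V}$ in block-diagonal form and checking that the alignment conditions decouple group by group), and then run the three-user Cadambe--Jafar construction of (\ref{eq:origin1})--(\ref{eq:origin8}) independently in each group. Two minor remarks: the per-user group precoders have sizes $(2n+1)\times(n+1)$ for user $1$ and $(2n+1)\times n$ for users $2$ and $3$ --- the figure $3n+1$ you quote is the total packet count summed over the three users, not the column dimension of any single $\textbf{V}_k^{(g)}$ --- and your closing Vandermonde/genericity argument is a useful supplement, since the paper instead establishes within-group decodability by explicitly expanding the received signal at each receiver using commutativity of diagonal matrices and asserting invertibility of the resulting $(2n+1)\times(2n+1)$ decoding matrices without proving it.
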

\smallskip 
\begin{proof}
Recall that the channel gain matrix in OFDM is diagonal. Generally, if every user tries to transmit $d$ packets over the $N$ subcarriers, we have
$$\textbf{HV}=
\begin{pmatrix}
h_1 & 0 & \cdots & 0 \\
0   & h_2& \cdots & 0 \\
\vdots & \vdots & \ddots & \vdots \\
0   & 0 & \cdots & h_N
\end{pmatrix}
\begin{pmatrix}
v_{11}  & \cdots & v_{1d} \\
v_{21}  & \cdots & v_{2d} \\
\vdots  & \ddots & \vdots \\
v_{N1}  & \cdots & v_{Nd}
\end{pmatrix}.
$$
The precoding vectors must satisfy the conditions given in (\ref{eq:origin1})-(\ref{eq:origin8}). 
Let the precoding matrix assume the following form.
\begin{equation}
\textbf{V}= 
\begin{pmatrix}
\widetilde{\textbf{V}}_1 & 0 & \cdots &0\\
0 & \widetilde{\textbf{V}}_2 & \cdots &0\\
\vdots &\vdots &\ddots &\vdots\\
0 & 0 & \cdots &\widetilde{\textbf{V}}_g\\
\end{pmatrix},
\label{eq:muofdmia85}
\end{equation}
where $g=N/(2n+1)$ is the number of groups and $\widetilde{\textbf{V}}_i$ is the precoding matrix for group $i$ 
with dimensions $(2n+1) \times (n+1)$ or $(2n+1) \times n$ (i.e., user 1 sends $(n+1)$ packets, 
and each of the other users sends $n$ packets over $(2n+1)$ subcarriers.)  
Without loss of generality, we assume $N$ is dividable by $2n+1$. 
Rewriting $\textbf{H}$ in the form of multiple diagonal sub-matrices with the same dimensions, we have
\begin{equation}
\textbf{HV}= 
\begin{pmatrix}
\widetilde{\textbf{H}}_1\widetilde{\textbf{V}}_1 & 0 & \cdots &0\\
0 & \widetilde{\textbf{H}}_2\widetilde{\textbf{V}}_2 & \cdots &0\\
\vdots &\vdots &\ddots &\vdots\\
0 & 0 & \cdots &\widetilde{\textbf{H}}_g\widetilde{\textbf{V}}_g\\
\end{pmatrix}.
\label{eq:muofdmia67}
\end{equation}

For instance, when $N=6$ and $n=1$, we have for transmitter $1$
\begin{equation}
\textbf{HV}= 
\begin{pmatrix}
h_1v_{11} & h_1v_{12} &0 &0\\
h_2v_{21} & h_2v_{22} &0 &0\\
h_3v_{31} & h_3v_{32} &0 &0\\
0 & 0&h_4v_{41} &h_4v_{42} \\
0 & 0&h_5v_{51} &h_5v_{52} \\
0 & 0&h_6v_{61} &h_6v_{62} \\
\end{pmatrix}.
\label{eq:muofdmia66}
\end{equation}
If there are 3 users, we can let $\textbf{H}_{21}\textbf{V}_2=\textbf{H}_{31}\textbf{V}_3$ at receiver $1$ to get
$$
\begin{pmatrix}
h^{(1)}_{21}v^{(1)}_2 & 0 & \cdots & 0 \\
h^{(2)}_{21}v^{(2)}_2 & 0 & \cdots & 0 \\
h^{(3)}_{21}v^{(3)}_2 & 0 & \cdots & 0 \\
0 & h^{(4)}_{21}v^{(4)}_2 & \cdots & 0 \\
0 & h^{(5)}_{21}v^{(5)}_2 & \cdots & 0 \\
0 & h^{(6)}_{21}v^{(6)}_2 & \cdots & 0 \\
\vdots & \vdots & \ddots & \vdots \\
0 & 0 & \cdots & h^{(N-2)}_{21}v^{(N-2)}_2\\
0 & 0 & \cdots & h^{(N-1)}_{21}v^{(N-1)}_2\\
0 & 0 & \cdots & h^{(N)}_{21}v^{(N)}_2
\end{pmatrix} \nonumber
$$
$$
= 
\begin{pmatrix}
h^{(1)}_{31}v^{(1)}_3 & 0 & \cdots & 0 \\
h^{(2)}_{31}v^{(2)}_3 & 0 & \cdots & 0 \\
h^{(3)}_{31}v^{(3)}_3 & 0 & \cdots & 0 \\
0 & h^{(4)}_{31}v^{(4)}_3 & \cdots & 0 \\
0 & h^{(5)}_{31}v^{(5)}_3 & \cdots & 0 \\
0 & h^{(6)}_{31}v^{(6)}_3 & \cdots & 0 \\
\vdots & \vdots & \ddots & \vdots \\
0 & 0 & \cdots & h^{(N-2)}_{31}v^{(N-2)}_3\\
0 & 0 & \cdots & h^{(N-1)}_{31}v^{(N-1)}_3\\
0 & 0 & \cdots & h^{(N)}_{31}v^{(N)}_3
\end{pmatrix}, \nonumber
$$
which indicates:
\begin{equation}
\begin{pmatrix}
h^{(i)}_{21}v^{(i)}_2 \\
h^{(i+1)}_{21}v^{(i+1)}_2 \\
h^{(i+2)}_{21}v^{(i+2)}_2 
\end{pmatrix}
=
\begin{pmatrix}
h^{(i)}_{31}v^{(i)}_3 \\
h^{(i+1)}_{31}v^{(i+1)}_3 \\
h^{(i+2)}_{31}v^{(i+2)}_3 
\end{pmatrix},
\; i=1, 4, \cdots, N-2. 
\end{equation}

Since the above conditions can also be obtained by separately encoding the $N/(2n+1)$ groups of subcarriers, we could decompose the problem into a number of subproblems, one for each group, and precode and decode the groups separately. 

It remains to show how to decode the packets for this scheme. Without loss of generality, we also assume $K=3$. If this scheme is adopted, 
each time we sequentially take out $2n+1$ subcarriers. The received signal at receiver $1$ is: 
\begin{eqnarray} \label{eq:muofdmia84}
\vec{y}_1 & = &\textbf{H}_{11}\textbf{V}_1\vec{x}_1+\textbf{H}_{21}\textbf{V}_2\vec{x}_2+\textbf{H}_{31}\textbf{V}_3\vec{x}_3 \nonumber \\
&=&\textbf{H}_{11}\textbf{V}_1\vec{x}_1+\textbf{H}_{21}\textbf{H}^{-1}_{23}\textbf{H}_{13}\textbf{C}\vec{x}_2+\textbf{H}_{31}\textbf{H}^{-1}_{32}\textbf{H}_{12}\textbf{B}\vec{x}_3\nonumber \\
&=&\textbf{H}_{11}\textbf{V}_1\vec{x}_1+\textbf{H}_{21}\textbf{H}^{-1}_{23}\textbf{H}_{13}\textbf{C}\vec{x}_2+\textbf{H}_{31}\textbf{H}^{-1}_{32}\textbf{H}_{12}\textbf{T}\textbf{C}\vec{x}_3\nonumber \\
&=&\textbf{H}_{11}\textbf{V}_1\vec{x}_1+\textbf{H}_{21}\textbf{H}^{-1}_{23}\textbf{H}_{13}\textbf{C}\vec{x}_2+\textbf{H}_{21}\textbf{H}^{-1}_{23}\textbf{H}_{13}\textbf{C}\vec{x}_3\nonumber \\
& = &
\textbf{H}_{11}\textbf{V}_1\vec{x}_1+\textbf{H}_{21}\textbf{H}^{-1}_{23}\textbf{H}_{13}\textbf{C}(\vec{x}_2+\vec{x}_3)
\nonumber \\
& = &
(\textbf{H}_{11}\textbf{V}_1~\textbf{H}_{21}\textbf{V}_2) \cdot 
\begin{pmatrix}
\vec{x}_1~\vec{x}_2+\vec{x}_3
\end{pmatrix}^T \hspace{-0.05in}. 
\end{eqnarray}
Taking the inverse of matrix $(\textbf{H}_{11}\textbf{V}_1~\textbf{H}_{21}\textbf{V}_2)$ and discard the packets from transmitters $2$ and $3$, we can recover the desired packets $\vec{x}_1$. Note that we exploit the {\em commutative} property of diagonal matrices in (\ref{eq:muofdmia84}).

At receiver $2$, the received signal is:
\begin{eqnarray} \label{eq:muofdmia65}
\vec{y}_2 & = &\textbf{H}_{12}\textbf{V}_1\vec{x}_1+\textbf{H}_{22}\textbf{V}_2\vec{x}_2+\textbf{H}_{32}\textbf{V}_3\vec{x}_3 \nonumber \\
& = & 
\textbf{H}_{12}(\vec{w}~\textbf{B})\vec{x}_1+\textbf{H}_{22}\textbf{V}_2\vec{x}_2+\textbf{H}_{12}\textbf{B}\vec{x}_3
\nonumber \\
& = &
\textbf{H}_{12}\vec{w}{x}^{(1)}_1+\textbf{H}_{22}\textbf{V}_2\vec{x}_2+\textbf{H}_{12}\textbf{B}
\begin{pmatrix}
x^{(2)}_1+x^{(1)}_3 \\
\vdots\\
x^{(n+1)}_1+x^{(n)}_3
\end{pmatrix}
\nonumber \\
& = &
(\textbf{H}_{22}\textbf{V}_2~\textbf{H}_{12}\vec{w}~\textbf{H}_{12}\textbf{B}) \cdot \nonumber \\
&& 
\begin{pmatrix}
\vec{x}_2, x^{(1)}_1, x^{(2)}_1+x^{(1)}_3, \cdots, x^{(n+1)}_1+x^{(n)}_3
\end{pmatrix}^T \hspace{-0.05in}.
\end{eqnarray}
Taking the inverse of matrix $(\textbf{H}_{22}\textbf{V}_2~\textbf{H}_{12}\vec{w}~\textbf{H}_{12}\textbf{B})$, we get $\vec{x}_2$.

At receiver $3$, the received signal is:
\begin{eqnarray} \label{eq:muofdmia64}
\vec{y}_3 & = &\textbf{H}_{13}\textbf{V}_1\vec{x}_1+\textbf{H}_{23}\textbf{V}_2\vec{x}_2+\textbf{H}_{33}\textbf{V}_3\vec{x}_3 \nonumber \\
& = & 
\textbf{H}_{13}(\textbf{C}~\textbf{T}^{n}\vec{w})\vec{x}_1+\textbf{H}_{13}\textbf{C}\vec{x}_2+\textbf{H}_{33}\textbf{V}_3\vec{x}_3
\nonumber \\
& = &
\textbf{H}_{13}\textbf{C}
\begin{pmatrix}
x^{(1)}_1+x^{(1)}_2 \\
\vdots\\
x^{(n)}_1+x^{(n)}_2
\end{pmatrix}
+\textbf{H}_{13}\textbf{T}^{n}\vec{w}x^{(n+1)}_1+\textbf{H}_{33}\textbf{V}_3\vec{x}_3
\nonumber \\
& = &
(\textbf{H}_{33}\textbf{V}_3~\textbf{H}_{13}\textbf{C}~\textbf{H}_{13}\textbf{T}^n\vec{w}) \cdot \nonumber \\
&&
\begin{pmatrix}
\vec{x}_3, x^{(1)}_1+x^{(1)}_2, \cdots, x^{(n)}_1+x^{(n)}_2, x^{(n+1)}_1
\end{pmatrix}^T \hspace{-0.05in}. 
\end{eqnarray}
Taking the inverse of matrix $(\textbf{H}_{33}\textbf{V}_3~\textbf{H}_{13}\textbf{C}~\textbf{H}_{13}\textbf{T}^n\vec{w})$, we can decode $\vec{x}_3$. After decoding each group separately, we then combine the decoded data. The theorem is thus proved.
\end{proof}

\smallskip
Note that the proof of Theorem~\ref{th:2} also leads to an algorithm to achieve interference alignment gains for any large $N \in \mathbb{N}$.

\smallskip
\subsubsection{Practical Issue of Large Channel Variance} \label{subsubsec:another}

Here we examine another practical problem of adopting interference alignment for multi-user OFDM.

A necessary condition to achieve interference alignment in OFDM is that the channel gain is drawn from a continuous distribution. 
As a result, if the variance of the channel is large, some of the channel gains can be very small in certain conditions, 
while some other channel gains can be very large. 
When precoding over all the subcarriers, 
after 
taking the inverse of the channel gain matrix, some entry of the precoding matrix could be $10^4$ times (or even more) larger than some other ones. The result is that the power of one subcarrier could be $10^8$ times (or even more) larger than that of another subcarrier. Given certain power constraints, the error performance of the system will suffer from great degradation, which makes interference alignment less useful.



In our proposed scheme, if the channel variance is large, there is also a certain chance that some entries of $\textbf{T}$ can be 
much larger than the others, since $\textbf{T}=\textbf{H}_{21} \textbf{H}_{12}^{-1} \textbf{H}_{32} \textbf{H}_{23}^{-1} \textbf{H}_{13} \textbf{H}_{31}^{-1}=\textbf{H}_{21} \textbf{H}_{32} \textbf{H}_{13} \textbf{H}_{12}^{-1}  \textbf{H}_{23}^{-1} \textbf{H}_{31}^{-1}$.
If we precode and decode over large $n$, since the last column of $\textbf{V}_1$, $\textbf{V}_2$ and $\textbf{V}_3$ are all obtained by multiplying $\textbf{T}^n$, the situation could be further exacerbated. 
The consequences are as follows. 
\begin{enumerate}
\item[(i)] Since some of the entries can be extremely small, the decoding matrices can be close to singular. Thus the desired signal cannot be decoded. 
\item[(ii)] Even if the decoding matrices is invertible, due to the transmitter power constraint, the system error performance could be rather poor. 
\end{enumerate}
In fact, even if $n=1$, there is still a chance that some matrices are not invertible. These 
are the
reasons why we cannot precode and decode for large $N$. This issue also demonstrate the importance of the proposed decomposition theorem (see Theorem~\ref{th:2}).

Take $\textbf{V}_1$ for instance. The constraint is the power on one subcarrier cannot be $10^a$ (e.g., $a=3$) times larger than the power on another subcarrier. If the constraint is violated, the system is considered to be in the outage state.
Let 
\begin{equation}
\textbf{T}= 
\begin{pmatrix}
t_1 & 0 & \cdots &0\\
0 & t_2 & \cdots &0\\
\vdots &\vdots &\ddots &\vdots\\
0 & 0 & \cdots &t_{2n+1}\\
\end{pmatrix},
\label{eq:muofdmia63}
\end{equation}
where $t_i=h^{(i)}_{21} h^{(i)}_{32} h^{(i)}_{13} /(h^{(i)}_{12} h^{(i)}_{23} h^{(i)}_{31}),~i=1,2,\ldots,(2n+1)$. $ t_{1},t_{2},\ldots,t_{2n+1} $ can be regarded as $i.i.d$ (independent identically distributed) random variables. Let $t$ denote the common distribution of $ t_{1},t_{2},\ldots,t_{2n+1} $. Define $ t_{(1)},t_{(2)},\ldots,t_{(2n+1)} $ be the order statistics of $ t_{1},t_{2},\ldots,t_{2n+1} $ with $t_{(1)}=\min_i t_i$, $t_{(2n+1)}=\max_i t_i$. 

Let $\gamma=t_{(2n+1)}/t_{(1)}$. From (\ref{eq:origin1})-(\ref{eq:origin8}), we have $\gamma^{2n} \leq 10^a$, thus
\begin{equation}
\gamma \leq 10^{a/(2n)},
\label{eq:muofdmia62}
\end{equation}
which means $t_{(2n+1)}$ cannot be $10^{a/(2n)}$ times larger than $t_{(1)}$. 

On the other hand, since $\gamma_{max}=10^{a/(2n)}$, we have
\begin{equation}
1-\left( \Pr\left\{t \geq \frac{t_{(2n+1)}}{10^ \frac{a}{2n} }\right\} \right)^{2n+1} \leq \Pr \left\{t_{(1)} \leq \frac{t_{(2n+1)}}{\gamma}\right\} \leq 1.
\label{eq:muofdmia61}
\end{equation}

It can be seen that $\Pr \left\{ t \geq \frac{t_{(2n+1)}}{10^ \frac{a}{2n} } \right\}$ is a decreasing function of $n$. With the power of $2n+1$, $\Pr \left\{ t_{(1)} \leq t_{(2n+1)}/\gamma \right\}$ will quickly converge to $1$. That means, with large $n$, $P(t_{(2n+1)} \geq \gamma t_{(1)})=1$. 
Therefore, with large $n$ the constraint~(\ref{eq:muofdmia62}) will not be satisfied.

Next, we show how large $n$ could be for given constraint~(\ref{eq:muofdmia62}).
%
The joint probability density function (PDF ) of $t_{(1)}$ and $t_{(2n+1)}$ is found as follows.
\begin{eqnarray} 
f_{t_{(1)}t_{(2n+1)}}(x,y) = \frac{\partial^2 F_{t_{(1)}t_{(2n+1)}}(x,y)}{\partial x \partial y}, \label{eq:prob99}
\end{eqnarray}
where $F_{t_{(1)}t_{(2n+1)}}(x,y)$ is the joint cumulative distribution function (CDF) of $t_{(1)}$ and $t_{(2n+1)}$. By the definition of partial derivative, we have:
%
\begin{align} \label{eq:prob95}
   &f_{t_{(1)}t_{(2n+1)}}(x,y) \nonumber \\ 
=& \frac{\partial }{\partial y} \{ \lim_{\Delta x \to 0} [F_{t_{(1)}t_{(2n+1)}}(x+\Delta x,y)- \nonumber \\ 
&F_{t_{(1)}t_{(2n+1)}}(x,y)]/\Delta x \} \nonumber \\
=& \lim_{\Delta x \to 0, \Delta y \to 0} [F_{t_{(1)}t_{(2n+1)}}(x+\Delta x,y+\Delta y)- \nonumber \\ 
&F_{t_{(1)}t_{(2n+1)}}(x,y+\Delta y)-F_{t_{(1)}t_{(2n+1)}}(x+\Delta x,y)+ \nonumber \\ 
&F_{t_{(1)}t_{(2n+1)}}(x,y)]/(\Delta x \Delta y) \nonumber \\
=& \lim_{\Delta x \to 0, \Delta y \to 0} [ \Pr\{x \leq t_{(1)} \leq x+ \Delta x, t_{(2n+1)} \leq y+\Delta y \} - \nonumber \\ 
& \Pr\{x \leq t_{(1)} \leq x+\Delta x, t_{(2n+1)} \leq y \}]/(\Delta x \Delta y) \nonumber \\
=& \lim_{\Delta x \to 0, \Delta y \to 0} \Pr \{x \leq t_{(1)} \leq x+\Delta x, \nonumber \\  
&y \leq t_{(2n+1)} \leq y+ \Delta y )/(\Delta x \Delta y \}.
\end{align}

To calculate the probability of the last equality, for any $x < y$, we can divide the $x$ axis into five disjoint intervals as: $I_1=(-\infty,x)$, $I_2=(x,x+\Delta x)$, $I_3=(x+\Delta x,y)$, $I_4=(y,y+\Delta y)$ and $I_5=(y+\Delta y,\infty)$. For each $t_i$, the probability it falls into each interval can be calculated as follows.
\begin{eqnarray}
&&p_1=\Pr\{t_i \in I_1\}=F_t(x) \label{eq:prob94} \\
&&p_2=\Pr\{t_i \in I_2\}=F_t(x+\Delta x)-F_t(x) \label{eq:prob93} \\
&&p_3=\Pr\{t_i \in I_3\}=F_t(y)-F_t(x+\Delta x) \label{eq:prob92} \\
&&p_4=\Pr\{t_i \in I_4\}=F_t(y+\Delta y)-F_t(y) \label{eq:prob91} \\
&&p_5=\Pr\{t_i \in I_5\}=1-F_t(y+\Delta y). \label{eq:prob90}
\end{eqnarray}

To make $(x \leq t_{(1)} \leq x+\Delta x, y \leq t_{(2n+1)} \leq y+\Delta y)$ happen, the statistics $\{ t_1, t_2, \ldots, t_{2n+1}\}$ must have exactly $1$ sample falling into interval $I_2$, $1$ falling into interval $I_4$, $(2n-1)$ falling into interval $I_3$, and $0$ elsewhere, which is a multinomial problem. So we have:
\begin{eqnarray} 
\Pr\{x \leq t_{(1)} \leq x+\Delta x, y \leq t_{(2n+1)} \leq y+\Delta y \}= \nonumber \\ 
\binom{2n+1}{0,1,(2n-1),1,0}p_1^0 p_2^1 p_3^{(2n-1)} p_4^1 p_5^0. \label{eq:prob89}
\end{eqnarray} 
It follows that
%
\begin{eqnarray}  \label{eq:prob87}
&& f_{t_{(1)}t_{(2n+1)}}(x,y) \nonumber \\
&=& \lim_{\Delta x \to 0, \Delta y \to 0} \left\{\frac{(2n+1)!}{(2n-1)!} \frac{F_t(x+\Delta x)-F_t(x)}{\Delta x} \times  \right. \nonumber \\
&& \left. \frac{F_t(y+\Delta y)-F_t(y)}{\Delta y} \times [F_t(y)-F_t(x+\Delta x)]^{2n-1} \right\} \nonumber \\
&=& (2n+1)(2n)f_t(x)f_t(y)[F_t(y)-F_t(x)]^{2n-1}. 
\end{eqnarray}

Since $t_i=h^{(i)}_{21} h^{(i)}_{32} h^{(i)}_{13} /(h^{(i)}_{12} h^{(i)}_{23} h^{(i)}_{31}),~i=1,2,\ldots,2n+1$, and each $h^{(i)}$ is a random variable, the distribution of $t_i$ is difficult to be explicitly found. Here we continue our analysis by approximating $t_i$ as a Uniform distributed or Rayleigh distributed random variable.

If $t_i$ is approximated as a Uniform distributed random variable and $t_i \in (0,1)$, we have:
\begin{eqnarray}
&&\Pr\left\{t_{(1)} \leq \frac{t_{(2n+1)}}{\gamma} \right\}\\ \label{eq:prob86}
&=&\int_0^1 \int_0^{\frac{y}{\gamma}} (2n+1)(2n)(y-x)^{2n-1} dxdy.\\ \label{eq:prob85}
&=&\left[1-\left(1-\frac{1}{\gamma}\right)^{2n}\right]\\ \label{eq:prob84}
& \geq & 1-(1-10^{-\frac{a}{2n}})^{2n}, \label{eq:prob83}
\end{eqnarray}
where the last inequality is a direct result of~(\ref{eq:muofdmia62}). 
Taking derivative of~(\ref{eq:prob83}), it can be found that $P_{outage}=\Pr\left\{t_{(1)} \leq \frac{t_{(2n+1)}}{\gamma}\right\}$ is an increasing function of $n$. For $a=3$, if $n=1$, $P_{outage}=0.0622$; if $n=2$, $P_{outage}=0.5431$; and if $n=3$, $P_{outage}=0.8978$. For a system with many subcarriers, it indicates that we can only precode over $n=1$.

If $t_i$ is approximated as a Rayleigh distributed random variable with PDF $f(x \mid\sigma)=\frac{x}{\sigma^2}\exp(-\frac{x^2}{2\sigma^2}),~x \geq 0$, then 
\begin{eqnarray}
f_{t_{(1)}t_{(2n+1)}}(x,y)=(2n+1)(2n)\frac{xy}{\sigma^4}\exp\left(-\frac{x^2+y^2}{2\sigma^2}\right) \cdot  \nonumber \\  
\left(\exp\left(-\frac{x^2}{2\sigma^2}\right)-\exp\left(-\frac{y^2}{2\sigma^2}\right)\right)^{2n-1}.
\end{eqnarray}\label{eq:prob82}
There's no closed-form solution of $\Pr\left\{t_{(1)} \leq \frac{t_{(2n+1)}}{\gamma}\right\}$ in this case. The numerical results are shown in Fig.~\ref{fig:prob}. It can be seen that the conclusion still holds, i.e., we can only precode over $n=1$. 

\begin{figure} [!t] 
\center{\includegraphics[width=3.4in]{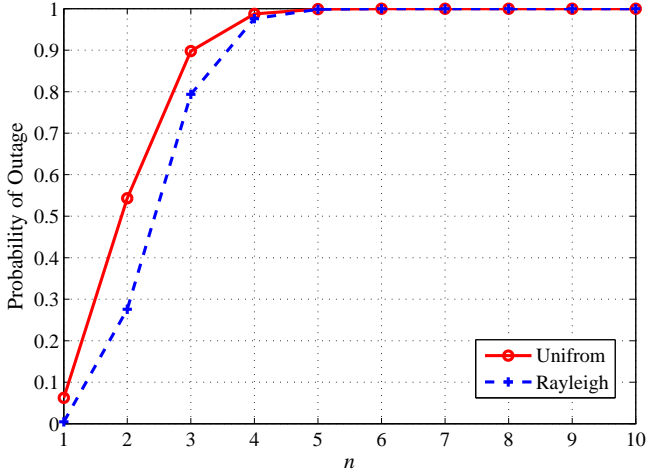}}
\caption{Probability of System Outage.} 
\label{fig:prob}
\end{figure}

Recall that Conjecture~\ref{cj:1} tells us $d_{max}=1.4998$ when $K=3$ and $n=1000$. Here we can see that this maximum DoF cannot be achieved under practical settings. So we have the following theorem.

\begin{theorem}\label{th:1s}
For a practical multi-user OFDM system with number of subcarriers less than $4149$, the maximum DoF is $d_{max}=1.33$, which is achieved when there are three transmitter/receiver pairs precoding over $3$ subcarriers each time.
\end{theorem}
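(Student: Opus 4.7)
The plan is to combine two results already established in the excerpt: the DoF formula $[(n+1)^q + (K-1)n^q]/[(n+1)^q + n^q]$ tabulated in Table~\ref{tab:efficiencies}, and the outage analysis of Section~\ref{subsubsec:another}. The first delivers the upper envelope of achievable normalized DoF as a function of $(K,n)$, and the second restricts the feasible value of $n$ under the transmit-power constraint $\gamma \leq 10^{a/(2n)}$.

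First, I would invoke the outage bound (\ref{eq:muofdmia62}) together with the closed-form outage probability derived from (\ref{eq:prob83}) for the Uniform approximation, and the numerical values shown in Fig.~\ref{fig:prob} for the Rayleigh approximation. Both give the same qualitative conclusion: for the typical power-ratio budget $a=3$, the outage probability jumps from roughly $6\%$ at $n=1$ to more than $50\%$ at $n=2$ and to nearly $90\%$ at $n=3$. I would formalize this by declaring any configuration with $n \geq 2$ infeasible in a practical system, so that we may restrict attention to $n=1$ regardless of $K$.

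Next, I would plug $n=1$ into the objective (\ref{eq:mimoofdmmax}) and the constraint (\ref{eq:constraint2}) and compute the resulting DoF $[2^q + (K-1)]/[2^q + 1]$ as a function of $K$ only, with $q=(K-1)(K-2)-1$. A direct check (and the first row of each block of Table~\ref{tab:efficiencies}) gives $d=4/3$ for $K=3$, $d=35/33\approx 1.06$ for $K=4$, $d=2052/2049 \approx 1.002$ for $K=5$, and for $K \geq 6$ the required number of subcarriers $(n+1)^q+n^q \geq 2^{19}+1 > 4149$ is infeasible under the hypothesis of the theorem. Since $q$ grows quadratically in $K$, the ratio $[2^q + (K-1)]/[2^q + 1]$ is monotonically decreasing in $K$ once $K \geq 3$, so the maximum over feasible $(K,n)$ is attained uniquely at $(K,n)=(3,1)$, giving $d_{\max}=4/3 \approx 1.33$.

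Finally, to see that this DoF is actually achievable for every $N \geq 3$ below the stated bound, I would appeal to Theorem~\ref{th:2}: the $N$ subcarriers may be partitioned into $\lfloor N/3 \rfloor$ disjoint groups of three, and each group is precoded and decoded with the closed-form matrices (\ref{eq:origin1})--(\ref{eq:origin8}) specialized to $n=1$ and $K=3$. The main subtlety, and the step I expect to require the most care, is arguing that the $n=1$ configuration really does survive the outage condition (\ref{eq:muofdmia62}) while every other candidate in Table~\ref{tab:efficiencies} violates it; the bound $N < 4149$ is chosen precisely to eliminate the $(K,n)=(4,4)$ entry which would otherwise appear to offer a higher theoretical DoF of $1.49$ before outage is taken into account.
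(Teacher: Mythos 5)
Your proposal is correct and follows essentially the same route as the paper: the paper never gives a standalone proof of this theorem, but presents it as the summary of the preceding analysis, namely the enumeration in Table~\ref{tab:efficiencies} restricted by the outage condition~(\ref{eq:muofdmia62}) (which forces $n=1$ for $a=3$), combined with the decomposition of Theorem~\ref{th:2} for achievability of $d=4/3$ on groups of three subcarriers. Your reconstruction, including the observation that the $4149$ threshold corresponds to the $(K,n)=(4,4)$ table entry, matches the paper's implicit argument.
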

\smallskip

\section{Multi-user MIMO OFDM with Interference Alignment} \label{subsec:mumowia}

In previous sections, we have considered applying interference alignment to OFDM systems. Since MIMO transmission technique can also be adopted to enhance the system throughput, we consider incorporating interference alignment to MIMO-OFDM systems in this section.

Suppose we have $M$ antennas at both the transmitter and receiver sides, and $N$ subcarriers in total. The signals received at the $i$-th receiver on subcarrier $n$ can be represented as: 
\begin{eqnarray} \label{eq:new96}
\vec{y}_i(n)  = \textbf{H}_{ii}(n)\textbf{V}_i(n)\vec{x}_i(n)+\sum_{j \neq i}\textbf{H}_{ji}(n)\textbf{V}_j(n)\vec{x}_j(n),
\end{eqnarray}
where $\textbf{H}_{ij}(n)$, $\textbf{V}_i(n)$, and $\vec{x}_i(n)$ are the channel matrix from transmitter $i$ to receiver $j$, precoding matrix at transmitter $i$, and data at transmitter $i$, respectively; and all of them are at subcarrier $n$. From~(\ref{eq:new96}), we can see that, the signals received can be represented as a matrix, with each column being the signals received from each subcarrier, i.e., $\textbf{Y}_i=\left[ \vec{y}_i(1)~\vec{y}_i(2)~\ldots~\vec{y}_i(n)\right]$. Or we could vectorize this matrix so that we get the following simpler form. 
\begin{eqnarray} \label{eq:new95}
\vec{y}_i  = \textbf{H}_{ii}\textbf{V}_i\vec{x}_i+\sum_{j \neq i}\textbf{H}_{ji}\textbf{V}_j\vec{x}_j.
\end{eqnarray}
Since each antenna pair could operate on any subcarrier and there is no crosstalk between subcarriers, the wireless channel $\textbf{H}_{ij}$ between transmitter $i$ and receiver $j$ is of the form as shown in~(\ref{eqn:leqn01}).

\begin{figure*}[!t]
\normalsize
\begin{equation}
\label{eqn:leqn01}
\textbf{H}_{ij}=
\begin{pmatrix}
h_{ij}^{1,1} & 0 & 0 & \cdots & h_{ij}^{N+1,1}& 0 & \cdots & h_{ij}^{(M-1)N+1,1} &0 & \cdots \\
0 &h_{ij}^{2,2}  & 0 & \cdots & 0 & h_{ij}^{N+2,2}& \cdots & 0& h_{ij}^{(M-1)N+2,2} & \cdots \\
\vdots & \vdots & \ddots & \vdots & \vdots & \ddots & \vdots & \vdots & \ddots & \vdots
\end{pmatrix}.
\end{equation}

\begin{equation}
\label{eqn:leqn02}
\textbf{H}_{ij}=
\begin{pmatrix}
h_{ij}^{1,1} & \cdots & h_{ij}^{M,1} & 0 &\cdots& 0 & \cdots & 0\\
\vdots & \ddots & \vdots  & 0 &\cdots& 0 & \cdots & 0\\
h_{ij}^{1,M} & \cdots &h_{ij}^{M,M}  & 0 &\cdots& 0 & \cdots & 0\\
\vdots & \vdots &\vdots  & \ddots &\cdots & \vdots & \vdots & \vdots\\
\cdots & \cdots &\cdots  & \cdots &\ddots & \cdots & \cdots & \cdots\\
0&\cdots &\cdots &0&\cdots&h_{ij}^{M(N-1)+1,M(N-1)+1}&\cdots&h_{ij}^{MN,M(N-1)+1}\\
\vdots& \vdots& \vdots& \vdots&\cdots &\vdots& \ddots& \vdots\\
0& \cdots& \cdots& 0&\cdots& h_{ij}^{M(N-1)+1,MN}& \cdots& h_{ij}^{MN,MN}
\end{pmatrix}.
\end{equation}

\hrulefill
\vspace*{4pt}
\end{figure*}


\smallskip 
\begin{theorem} \label{th:3}
For a MIMO-OFDM system with $N$ subcarriers and $M$ antennas at each transmitter and receiver side, we can divide
the subcarriers into $\lfloor N/(2n+1) \rfloor$ groups, where $n \in \mathbb{N}$, and precode and decode the groups separately to achieve the interference alignment gain.
\end{theorem}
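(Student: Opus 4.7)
The plan is to adapt the decomposition argument of Theorem~\ref{th:2} to the MIMO setting. The key structural fact is that when the MIMO-OFDM channel $\textbf{H}_{ij}$ is written as in (\ref{eqn:leqn02}), it is block-diagonal with $N$ blocks of size $M\times M$, one block $\textbf{H}_{ij}^{(s)}$ per subcarrier $s$, because distinct subcarriers remain orthogonal even in the presence of multiple antennas. I would first choose each precoding matrix $\textbf{V}_i$ to respect the subcarrier partition: $\textbf{V}_i$ is block-diagonal with $g=\lfloor N/(2n+1)\rfloor$ blocks $\widetilde{\textbf{V}}_i^{(k)}$ of size $M(2n+1)\times d_k$, so block $k$ acts only on the $(2n+1)$ subcarriers of group $k$. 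Because both $\textbf{H}_{ji}$ and $\textbf{V}_j$ are block-diagonal with the same partition of the $MN$-dimensional signal space, the product $\textbf{H}_{ji}\textbf{V}_j$ inherits that block-diagonal structure, and the contributions of different groups to $\vec{y}_i$ live in mutually orthogonal subcarrier subspaces. This reduces the alignment problem to $g$ independent subproblems, one per group.

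Within a single group I would apply the Cadambe-Jafar construction of (\ref{eq:origin1})-(\ref{eq:origin8}), with the scalar diagonal channel gains replaced by the block-diagonal MIMO channel $\widetilde{\textbf{H}}_{ij}^{(k)}$ of size $M(2n+1)\times M(2n+1)$ and the vector $\vec{w}$ replaced by an all-ones vector of matching dimension. The alignment identities such as $\widetilde{\textbf{H}}_{21}^{(k)}\widetilde{\textbf{V}}_2^{(k)}=\widetilde{\textbf{H}}_{31}^{(k)}\widetilde{\textbf{V}}_3^{(k)}$ are then imposed subcarrier-block by subcarrier-block inside the group, and the decoder at receiver $i$ inverts the per-group effective channel $\bigl(\widetilde{\textbf{H}}_{ii}^{(k)}\widetilde{\textbf{V}}_i^{(k)},\ \widetilde{\textbf{H}}_{ji}^{(k)}\widetilde{\textbf{V}}_j^{(k)}\bigr)$ exactly as in (\ref{eq:muofdmia84})-(\ref{eq:muofdmia64}), before concatenating the per-group estimates to recover the full $\vec{x}_i$.

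The hard part, and the reason this is not a direct corollary of Theorem~\ref{th:2}, is that the proof of Theorem~\ref{th:2} exploits the commutativity of scalar diagonal matrices at the crucial step (\ref{eq:muofdmia84}), where $\textbf{H}_{31}\textbf{H}_{32}^{-1}\textbf{H}_{12}\textbf{T}$ collapses to $\textbf{H}_{21}\textbf{H}_{23}^{-1}\textbf{H}_{13}$; for general $M\times M$ MIMO blocks $\textbf{H}_{ij}^{(s)}$ this cancellation fails in isolation. I would handle this by restricting the cancellation argument to each single-subcarrier slot $s$ separately: within slot $s$ the alignment requirement becomes a finite-dimensional MIMO condition on the $M$-dimensional precoding columns, and because the block-diagonal-per-subcarrier structure of $\widetilde{\textbf{T}}^{(k)}=\widetilde{\textbf{H}}_{21}^{(k)}(\widetilde{\textbf{H}}_{12}^{(k)})^{-1}\widetilde{\textbf{H}}_{32}^{(k)}(\widetilde{\textbf{H}}_{23}^{(k)})^{-1}\widetilde{\textbf{H}}_{13}^{(k)}(\widetilde{\textbf{H}}_{31}^{(k)})^{-1}$ never mixes different $s$, stitching the per-slot solutions together yields a valid $\widetilde{\textbf{V}}_i^{(k)}$. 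Once all $g$ groups are handled in this way, the theorem follows.
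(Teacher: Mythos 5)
Your decomposition argument---block-diagonality of $\textbf{H}_{ij}$ per subcarrier after reordering the signal dimensions as in (\ref{eqn:leqn02}), block-diagonal precoders per group, and the observation that products of matrices that are block-diagonal with respect to the same partition remain block-diagonal, so the problem splits into $\lfloor N/(2n+1)\rfloor$ independent subproblems---is exactly the route the paper takes in its proof of Theorem~\ref{th:3}, which likewise reduces to Theorem~\ref{th:2} with scalars replaced by blocks. You also deserve credit for flagging the one step where the scalar proof genuinely breaks: the cancellation in (\ref{eq:muofdmia84}) exploits commutativity of diagonal matrices, which full $M\times M$ blocks do not have.

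However, your proposed repair does not close that gap. Restricting the cancellation argument to a single subcarrier slot $s$ changes nothing, because the non-commuting objects are precisely the full $M\times M$ blocks $\textbf{H}_{ij}^{(s)}$ living \emph{inside} that slot; the block structure across slots was never the obstruction. With $\textbf{T}$ defined as in (\ref{eq:origin7}), the quantity $\textbf{H}_{31}^{(s)}(\textbf{H}_{32}^{(s)})^{-1}\textbf{H}_{12}^{(s)}\textbf{T}^{(s)}$ simply does not collapse to $\textbf{H}_{21}^{(s)}(\textbf{H}_{23}^{(s)})^{-1}\textbf{H}_{13}^{(s)}$, so the interference from users $2$ and $3$ fails to align at receiver $1$ slot by slot just as it does globally. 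The paper's actual fix, given in the realization following Theorem~\ref{th:4}, is to reorder the factors and set $\textbf{T}=\textbf{H}_{12}^{-1}\textbf{H}_{32}\textbf{H}_{31}^{-1}\textbf{H}_{21}\textbf{H}_{23}^{-1}\textbf{H}_{13}$, so that $\textbf{H}_{31}\textbf{H}_{32}^{-1}\textbf{H}_{12}\textbf{T}$ telescopes to $\textbf{H}_{21}\textbf{H}_{23}^{-1}\textbf{H}_{13}$ by cancelling adjacent $\textbf{H}\textbf{H}^{-1}$ pairs, with no commutation ever invoked; the matrices $\textbf{A}$, $\textbf{B}$, $\textbf{C}$ are correspondingly rebuilt from powers of $\textbf{T}$ up to $(n+1)M-1$ (so that $\textbf{B}=\textbf{T}\textbf{C}$ still holds), and invertibility of the blocks is supplied by Lemma~\ref{lm:1}. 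You need this reordering, or an equivalent commutativity-free identity, for the within-group alignment to go through.
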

\smallskip 
\begin{proof}
In Theorem~\ref{th:2}, we have actually established that for a system of diagonal channels, we could separately precode and decode each group of subcarriers. Now consider the case when all the devices are equipped with multiple antennas. We can still divide the subcarriers into different groups, then precode and decode them separately, since we are able to distinguish the signals from different antennas and different subcarriers. In other words, upon receiving a signal, the receiver has the knowledge of from which antenna and which subcarrier it gets the signal. So by properly adjusting the order of the data transmitted, the channel is essentially of the form in (\ref{eqn:leqn02}). We can readily identify that (\ref{eqn:leqn02}) is actually in the block diagonal form with the $i$-th block corresponding to the channels associated with the $i$-th subcarrier. Within each block, we have standard MIMO channels. Letting $V$, with dimension $MN \times d$, assume the form of~(\ref{eq:muofdmia85}), by similar arguments as in Theorem~\ref{th:2}, we could precode and decode the groups separately to achieve the interference alignment gain.
\end{proof}

\begin{lemma}\label{lm:1}
All the channel matrices and matrix $T$ are invertible.
\end{lemma}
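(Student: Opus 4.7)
The plan is to establish invertibility in three stages, corresponding to the three types of matrices that appear in the precoding/decoding construction: the diagonal channel matrices of the OFDM case, the block-diagonal channel matrices of the MIMO-OFDM case (as in (\ref{eqn:leqn02})), and the composite matrix $\textbf{T} = \textbf{H}_{21}\textbf{H}_{12}^{-1}\textbf{H}_{32}\textbf{H}_{23}^{-1}\textbf{H}_{13}\textbf{H}_{31}^{-1}$.

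First I would invoke the standing modeling assumption (already used repeatedly in Section~\ref{subsec:mmmimoofdm} and explicitly stated in Section~\ref{subsubsec:another}) that every channel coefficient $h_{ij}^{(\cdot)}$ is drawn from a continuous distribution. The set of realizations for which any individual coefficient is exactly zero therefore has Lebesgue measure zero, and a finite union of such events still has measure zero. Consequently, with probability one, all channel coefficients are nonzero. For the OFDM case, each $\textbf{H}_{ij}$ is diagonal; its determinant is the product of its diagonal entries, each of which is a nonzero channel gain, so $\det(\textbf{H}_{ij}) \neq 0$ almost surely and $\textbf{H}_{ij}$ is invertible.

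Second, for the MIMO-OFDM case, after the reordering described in the proof of Theorem~\ref{th:3}, each $\textbf{H}_{ij}$ takes the block-diagonal form (\ref{eqn:leqn02}) with $N$ blocks of size $M \times M$, each corresponding to a standard MIMO sub-channel on one subcarrier. Since $\det(\textbf{H}_{ij})$ equals the product of the determinants of these blocks, it suffices to show each block is almost surely invertible. The determinant of each $M\times M$ block is a nontrivial polynomial in its $M^2$ entries, and the zero set of a nontrivial polynomial in $\mathbb{R}^{M^2}$ (or $\mathbb{C}^{M^2}$) has Lebesgue measure zero; since the block's entries are drawn from a continuous joint distribution, each block is nonsingular with probability one, and hence so is $\textbf{H}_{ij}$.

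Third, for $\textbf{T}$, I would simply observe that it is a product of matrices of the forms already shown to be invertible (the six factors $\textbf{H}_{21}$, $\textbf{H}_{12}^{-1}$, $\textbf{H}_{32}$, $\textbf{H}_{23}^{-1}$, $\textbf{H}_{13}$, $\textbf{H}_{31}^{-1}$), and the product of invertible matrices is invertible, with $\det(\textbf{T}) = \det(\textbf{H}_{21})\det(\textbf{H}_{32})\det(\textbf{H}_{13})/[\det(\textbf{H}_{12})\det(\textbf{H}_{23})\det(\textbf{H}_{31})] \neq 0$. The only mildly delicate step is the second stage, since it is the one place where invertibility is not immediate from the diagonal structure; however, the polynomial-zero-set argument handles it cleanly, and no separate independence assumption between entries of different $\textbf{H}_{ij}$'s is needed because the claim about each block is a statement about a single continuous joint distribution on its own $M^2$ entries.
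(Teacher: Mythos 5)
Your proposal is correct and follows essentially the same route as the paper's proof: decompose each channel matrix into blocks, argue each block is almost surely nonsingular because its entries come from a continuous distribution, and conclude that $\textbf{T}$ is invertible as a product of invertible matrices. The only difference is that you spell out the polynomial-zero-set justification for the almost-sure nonsingularity of each block, which the paper simply asserts.
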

\begin{proof}
As shown in (\ref{eqn:new97}), the inverse of a block matrix can be found by calculating the inverse of each block. Since for each block, we have a standard MIMO channel matrix and each of its entry is drawn from a continuous random distribution, each block is invertible with probability $1$. So each channel matrix is invertible. Since the product of invertible matrices is still invertible, according to (\ref{eq:origin7}), matrix $T$ is invertible.
\begin{equation}
\label{eqn:new97}
\begin{pmatrix}
B_{1}& 0 & 0 \\
0 & B_{2}&0 \\
0 & 0 & B_{3}
\end{pmatrix}^{-1}
=
\begin{pmatrix}
B_{1}^{-1}& 0 & 0 \\
0 & B_{2}^{-1}&0 \\
0 & 0 & B_{3}^{-1}
\end{pmatrix}.
\end{equation}
\end{proof}

\smallskip 
\begin{theorem} \label{th:4}
For a MIMO-OFDM system with $N$ subcarriers and $M$ antennas at each transmitter and receiver side, the maximum gain is $\frac{4}{3}M$.
\end{theorem}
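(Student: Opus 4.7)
The plan is to combine the decomposition approach of Theorem~\ref{th:3} with the practical outage analysis from Section~\ref{subsubsec:another}, showing that the $4/3$ bound established in the single-antenna case (Theorem~\ref{th:1s}) scales by a factor of $M$ when each node is equipped with $M$ antennas.

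First, I would invoke Theorem~\ref{th:3} to reduce the problem to precoding and decoding within groups of $2n+1$ subcarriers. By the same outage argument that proved Theorem~\ref{th:1s}---now applied to the block matrix analog of $\textbf{T}$, whose blocks are $M \times M$ sub-channels instead of scalars---only $n=1$ is feasible in practice, so each group contains exactly $3$ subcarriers and exactly $3$ user pairs are active. Within one such group, the effective channel $\textbf{H}_{ij}$ (dimension $3M \times 3M$) is block-diagonal with three $M\times M$ MIMO blocks, one per subcarrier, and each block is drawn from a continuous distribution, so by Lemma~\ref{lm:1} every block and the associated $\textbf{T}$ are invertible.

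Second, I would construct precoding matrices that mirror \eqref{eq:origin1}--\eqref{eq:origin8} but where every ``scalar'' channel entry is replaced by its $M \times M$ block and $\vec{w}$ is replaced by a stacked column of $M \times M$ identity matrices. User~$1$ then transmits $(n+1)M = 2M$ packets and users~$2$ and~$3$ each transmit $nM = M$ packets over the $3$ subcarriers, for a total of $4M$ streams per group. The alignment identities at each receiver reduce to block matrix equations that hold by precisely the same algebra as the scalar OFDM derivation in Theorem~\ref{th:2}; the invertibility of $\textbf{T}$ (Lemma~\ref{lm:1}) guarantees the construction is well-defined, and the commutativity of block-diagonal matrices with matching block structure is what carries the argument through (even though arbitrary MIMO matrices do not commute, the diagonal arrangement across distinct subcarriers does).

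Third, I would verify decoding. After zero-forcing at receiver~$i$, the desired signal lives in a $2M$- or $M$-dimensional subspace, and the composite decoding matrix $(\textbf{H}_{ii}\textbf{V}_i \;\; \textbf{H}_{ji}\textbf{V}_j)$ is block-structured with each block being a product of the invertible MIMO sub-channels; invertibility therefore follows from Lemma~\ref{lm:1}. Hence all $4M$ packets per group of $3$ subcarriers are recoverable, yielding gain $\frac{4M}{3}$ per subcarrier.

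For optimality, I would argue that the practical upper bound is inherited from Theorem~\ref{th:1s}: the MIMO dimension contributes a flat multiplicative factor $M$ (each subcarrier's $M \times M$ block admits at most $M$ independent streams by the standard MIMO DoF bound), while the remaining alignment gain across subcarriers is bottlenecked by the same symbol-extension and outage constraints that forced $K=3$, $n=1$ in the scalar case. The main obstacle will be this converse: one must rule out exotic schemes that couple antennas and subcarriers jointly to extract alignment beyond $M$ times the single-antenna bound. I would handle this by observing that the block-diagonal structure of~\eqref{eqn:leqn02} prevents any channel product from mixing antenna indices across distinct subcarriers, so any alignable structure must already appear within a fixed antenna index, and the scalar analysis applies verbatim to each such slice.
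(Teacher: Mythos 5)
Your proposal follows essentially the same route as the paper: decompose via Theorem~\ref{th:3}, restrict to $n=1$ (groups of $3$ subcarriers) by the outage argument behind Theorem~\ref{th:1s}, then lift the scalar construction of Theorem~\ref{th:2} to $M\times M$ blocks with invertibility supplied by Lemma~\ref{lm:1}, giving $4M$ streams per $3$ subcarriers. The only cosmetic difference is that you generate user~$1$'s $2M$ columns from stacked identity blocks, whereas the paper keeps the all-ones vector $\vec{w}$ and takes higher powers $\textbf{T}^0\vec{w},\ldots,\textbf{T}^{(n+1)M-1}\vec{w}$; both yield the same subspace-containment relations and the same gain.
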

\begin{proof}
According to Theorem~\ref{th:3}, we could precode and decode over groups of subcarrier. Also, according to our previous results, we can only precode and decode over $3$ subcarriers. So subcarrier-wise, the normalized DoF is $4/3$. 

We next show that $\frac{4}{3}M$ is the maximum achievable DoF. Firstly, we notice that by dividing the subcarriers into groups of $3$, taking $\textbf{H}_{11}$ for instance, it is transformed from~(\ref{eqn:new99}) to~(\ref{eqn:new98}). With the establishment of Lemma~\ref{lm:1}, following the proof of Theorem~\ref{th:2}, and replacing the scalars with blocks, we readily have the maximum gain of $\frac{4}{3}M$.
\begin{equation}
\label{eqn:new99}
\textbf{H}_{11}=
\begin{pmatrix}
h_{11}^{11} &0&0& h_{11}^{41} & 0 & 0 \\
0 & h_{11}^{22} & 0&0 & h_{11}^{52}&0 \\
0 & 0 & h_{11}^{33} &0 &0&h_{11}^{63}  \\
h_{11}^{14} & 0&0&h_{11}^{44} & 0 & 0 \\
0&h_{11}^{25}&0&0&h_{11}^{55}&0 \\
0&0&h_{11}^{36}&0&0&h_{11}^{66}
\end{pmatrix}.
\end{equation}

\begin{equation}
\label{eqn:new98}
\textbf{H}_{11}=
\begin{pmatrix}
h_{11}^{11} &h_{11}^{21}&0& 0 & 0 & 0 \\
h_{11}^{12} & h_{11}^{22} & 0&0 &0&0 \\
0 & 0 & h_{11}^{33} &h_{11}^{44} &0&0  \\
0 & 0&h_{11}^{34}&h_{11}^{44} & 0 & 0 \\
0&0&0&0&h_{11}^{55}&h_{11}^{65} \\
0&0&0&0&h_{11}^{56}&h_{11}^{66}
\end{pmatrix}.
\end{equation}

\end{proof}

We next show how to achieve this gain. 
We design $\textbf{V}_1$, $\textbf{V}_2$, and $\textbf{V}_3$ as follows.
\begin{eqnarray}
&& \textbf{V}_1=\textbf{A}  \label{eq:new89} \\
&& \textbf{V}_2=\textbf{H}_{23}^{-1} \textbf{H}_{13} \textbf{C} \label{eq:new88} \\ 
&& \textbf{V}_3=\textbf{H}_{32}^{-1} \textbf{H}_{12} \textbf{B}, \label{eq:new87}
\end{eqnarray}
where
\begin{eqnarray}
&& \textbf{A}=[\vec{w}~\textbf{T}\vec{w}~\textbf{T}^2\vec{w}~\cdots~\textbf{T}^{(n+1)M-1}\vec{w}] \\ \label{eq:new86}
&& \textbf{B}=[\textbf{T}^M\vec{w}~\textbf{T}^{M+1}\vec{w}~\cdots~\textbf{T}^{(n+1)M-1}\vec{w}] \\ \label{eq:new85}
&& \textbf{C}=[\textbf{T}^{M-1}\vec{w}~\textbf{T}^{M}\vec{w}~\cdots~\textbf{T}^{(n+1)M-2}\vec{w}] \\ \label{eq:new84}
&& \textbf{T}=\textbf{H}_{12}^{-1} \textbf{H}_{32} \textbf{H}_{31}^{-1} \textbf{H}_{21} \textbf{H}_{23}^{-1} \textbf{H}_{13}  \\ \label{eq:new83}
&& \vec{w}=[1~1~\cdots~1]^T. \label{eq:new82}
\end{eqnarray}

It can be observed that:
\begin{eqnarray}\label{eq:new81}
\textbf{A}&=&[\vec{w}~\textbf{T}\vec{w}~\cdots~\textbf{T}^{M-1}\vec{w}~\textbf{B}] \\ 
&=&[\vec{w}~\textbf{T}\vec{w}~\cdots~\textbf{T}^{M-2}\vec{w}~\textbf{C}~\textbf{T}^{(n+1)M-1}\vec{w}].
\end{eqnarray}

At receiver $1$, the received signals can be written as:
\begin{eqnarray} \label{eq:new92}
\vec{y}_1 & = &\textbf{H}_{11}\textbf{V}_1\vec{x}_1+\textbf{H}_{21}\textbf{V}_2\vec{x}_2+\textbf{H}_{31}\textbf{V}_3\vec{x}_3 \nonumber \\
& = & 
\textbf{H}_{11}\textbf{V}_1\vec{x}_1+\textbf{H}_{21}\textbf{H}^{-1}_{23}\textbf{H}_{13}\textbf{C}\vec{x}_2+\textbf{H}_{31}\textbf{H}^{-1}_{32}\textbf{H}_{12}\textbf{B}\vec{x}_3
\nonumber \\
& = &
\textbf{H}_{11}\textbf{V}_1\vec{x}_1+\textbf{H}_{21}\textbf{H}^{-1}_{23}\textbf{H}_{13}\textbf{C}\vec{x}_2+\textbf{H}_{31}\textbf{H}^{-1}_{32}\textbf{H}_{12}\textbf{T}\textbf{C}\vec{x}_3
\nonumber \\
& = &
\textbf{H}_{11}\textbf{V}_1\vec{x}_1+\textbf{H}_{21}\textbf{H}^{-1}_{23}\textbf{H}_{13}\textbf{C}\vec{x}_2+\textbf{H}_{21}\textbf{H}^{-1}_{23}\textbf{H}_{13}\textbf{C}\vec{x}_3
\nonumber \\
& = &
\textbf{H}_{11}\textbf{V}_1\vec{x}_1+\textbf{H}_{21}\textbf{H}^{-1}_{23}\textbf{H}_{13}\textbf{C}(\vec{x}_2+\vec{x}_3)
\nonumber \\
& = &
(\textbf{H}_{11}\textbf{V}_1~\textbf{H}_{21}\textbf{V}_2) \cdot 
\begin{pmatrix}
\vec{x}_1 \\
\vec{x}_2+\vec{x}_3
\end{pmatrix} \hspace{-0.05in}. 
\end{eqnarray}

For signals at receiver $2$, we have:
\begin{eqnarray} \label{eq:new91}
\vec{y}_2 & = &\textbf{H}_{12}\textbf{V}_1\vec{x}_1+\textbf{H}_{22}\textbf{V}_2\vec{x}_2+\textbf{H}_{32}\textbf{V}_3\vec{x}_3 \nonumber \\
& = & 
\textbf{H}_{12}(\vec{w}~\textbf{T}\vec{w}~\cdots~\textbf{T}^{M-1}\vec{w}~\textbf{B})\vec{x}_1+\textbf{H}_{22}\textbf{V}_2\vec{x}_2+\textbf{H}_{12}\textbf{B}\vec{x}_3
\nonumber \\
& = &
\textbf{H}_{12}(\vec{w}~\textbf{T}\vec{w}~\cdots~\textbf{T}^{M-1}\vec{w})
\begin{pmatrix}
x_1^{(1)} \\
\vdots \\
x_1^{(M)}
\end{pmatrix} \nonumber \\
&+&\textbf{H}_{12}\textbf{B}
\begin{pmatrix}
x_1^{(M+1)} \\
\vdots \\
x_1^{((n+1)M)}
\end{pmatrix}
+\textbf{H}_{22}\textbf{V}_2\vec{x}_2+\textbf{H}_{12}\textbf{B}
\vec{x}_3\nonumber \\
& = &
(\textbf{H}_{22}\textbf{V}_2~\textbf{H}_{12}(\vec{w}~\textbf{T}\vec{w}~\cdots~\textbf{T}^{M-1}\vec{w})~\textbf{H}_{12}\textbf{B}) \cdot \nonumber \\
&& 
\begin{pmatrix}
\vec{x}_2 \\ 
x^{(1)}_1\\ 
\vdots \\
x^{(M)}_1 \\ 
x^{(M+1)}_1+x^{(1)}_3\\ 
\vdots \\ 
 x^{((n+1)M)}_1+x^{(nM)}_3 
\end{pmatrix} \hspace{-0.05in}.
\end{eqnarray}

And similarly for signals at receiver $3$, we have:
\begin{eqnarray} \label{eq:new90}
\vec{y}_3 & = &\textbf{H}_{13}\textbf{V}_1\vec{x}_1+\textbf{H}_{23}\textbf{V}_2\vec{x}_2+\textbf{H}_{33}\textbf{V}_3\vec{x}_3 \nonumber \\
& = & 
\textbf{H}_{13}(\vec{w}~\textbf{T}\vec{w}~\cdots~\textbf{T}^{M-2}\vec{w}~\textbf{C}~\textbf{T}^{(n+1)M-1}\vec{w})\vec{x}_1 \nonumber \\
&+&\textbf{H}_{13}\textbf{C}\vec{x}_2+\textbf{H}_{33}\textbf{V}_3\vec{x}_3
\nonumber \\
& = &
\textbf{H}_{13} (\vec{w}~\textbf{T}\vec{w}~\cdots~\textbf{T}^{M-2}\vec{w})
\begin{pmatrix}
x^{(1)}_1 \\
\vdots\\
x^{(M-1)}_1
\end{pmatrix} +\textbf{H}_{33}\textbf{V}_3\vec{x}_3 \nonumber \\
&+&\textbf{H}_{13}\textbf{C}
\begin{pmatrix}
x_1^{(M)}\\
\vdots\\
x_1^{((n+1)M-1)}
\end{pmatrix}
+\textbf{H}_{13}\textbf{C}\vec{x}_2
\nonumber \\
&+&\textbf{H}_{13}\textbf{T}^{(n+1)M-1}\vec{w}x^{(n+1)M}_1 \nonumber \\
& = &
\begin{pmatrix}
\textbf{H}_{33}\textbf{V}_3 \\ 
\textbf{H}_{13}\textbf{C} \\
\textbf{H}_{13}(\vec{w}~\textbf{T}\vec{w}~\cdots~\textbf{T}^{M-2}\vec{w})\\
\textbf{H}_{13}\textbf{T}^{((n+1)M-1)}\vec{w}
\end{pmatrix}^T
\cdot \nonumber \\
&&
\begin{pmatrix}
\vec{x}_3\\
x_1^{(M)}+x_2^{(1)}\\
\vdots \\
x_1^{((n+1)M-1)}+x_2^{(nM)}\\
x_1^{(1)}\\
\vdots \\
x_1^{(M-1)}\\
x_1^{((n+1)M)}
\end{pmatrix} \hspace{-0.05in}. 
\end{eqnarray}

From~(\ref{eq:new92})--(\ref{eq:new90}), we can see that the desired signals are all free from interferences.


We can also calculate the probability of system outage when multiple antennas are deployed. 
So we need to find the probability of $\Pr\left\{ t_{(1)} \leq \frac{t_{((2n+1)M)}}{\gamma} \right\}$.
With similar arguments, the joint PDF of $t_{(1)}$ and $t_{((2n+1)M)}$ can be found as:
\begin{align} 
   & f_{t_{(1)}t_{((2n+1)M)}}(x,y) \nonumber \\
=& \lim_{\Delta x \to 0, \Delta y \to 0} P(x \leq t_{(1)} \leq x+\Delta x,  \nonumber \\ 
& y \leq t_{((2n+1)M)} \leq y+ \Delta y )/(\Delta x \Delta y) \nonumber \\
=& \binom{(2n+1)M}{0,1,(2n+1)M-2,1,0}p_1^0 p_2^1 p_3^{(2n+1)M-2} p_4^1 p_5^0. 
\end{align} 

If $t_i$ is approximated as a Uniform distributed variable in the range of $(0,1)$, the probability $\Pr\left\{ t_{(1)} \leq \frac{t_{((2n+1)M)}}{\gamma}\right\}$ can be found as follows.
\begin{align}
&\; \Pr\left\{ t_{(1)} \leq \frac{t_{((2n+1)M)}}{\gamma} \right\} =(2nM+M-1) \cdot\label{eq:new79} \\ 
&\; (2nM+M)\int_0^1 \int_0^{\frac{y}{\gamma}} (y-x)^{2nM+M-2} dxdy \label{eq:new78} \\ 
=&\; 1-\left(1-\frac{1}{\gamma}\right)^{(2n+1)M-1}\label{eq:new77} \\
\geq & \; 1-\left(1-10^{-\frac{a}{2(n+1)M-2}}\right)^{(2n+1)M-1}, \label{eq:new76}
\end{align}

If $t_i$ is approximated as a Rayleigh distributed variable, there is no closed-form solution for probability $\Pr \left\{ t_{(1)} \leq \frac{t_{((2n+1)M)}}{\gamma} \right\}$. The joint PDF of $t_{(1)}$ and $t_{((2n+1)M)}$ can be derived as:
\begin{align}
& f_{t_{(1)}t_{((2n+1)M)}}(x,y)=((2n+1)M)((2n+1)M-1)\frac{xy}{\sigma^4} \cdot \nonumber \\ 
& \exp\left(-\frac{x^2+y^2}{2\sigma^2}\right) \left[\exp\left(-\frac{x^2}{2\sigma^2}\right)-\exp\left(-\frac{y^2}{2\sigma^2}\right)\right]^{(2n+1)M-2} \label{eq:new75}
\end{align}

Figs.~\ref{fig:probmimouni} and~\ref{fig:probmimoray} illustrate the probabilities of system outage for Uniform and Rayleigh distributions, respectively. 
Here we also set $a=3$. So the power on one subcarrier cannot be $10^3$ times larger than the power on any other subcarrier.
We can see that when $n=1$ and $M=2$, the system outage probabilities are $0.8505$ and $0.2758$ for Uniform and Rayleigh distributions respectively. 
For $n=1$ and $M=3$, the outage probabilities are even higher: $0.9962$ and $0.7937$. 
For $n=2$ and $M=2$, the outage probabilities are unacceptably high as $0.9981$ and $0.9971$.
We can further see from simulations in the next section that, the higher outage probability is undesirable.

\begin{figure} [!t] 
\center{\includegraphics[width=3.4in]{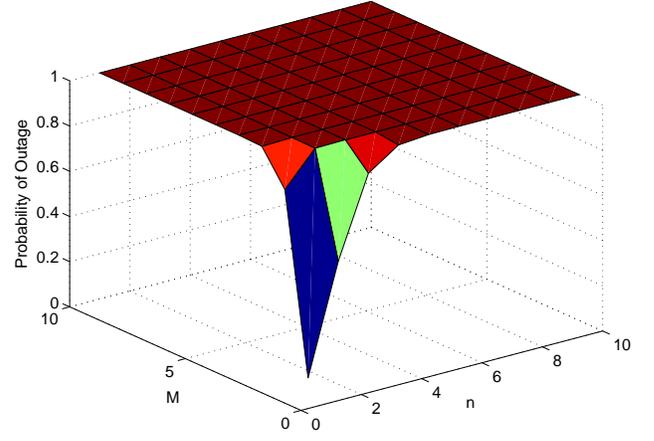}}
\caption{Probability of System Outage with multiple antennas for Uniform distribution.} 
\label{fig:probmimouni}
\end{figure}

\begin{figure} [!t] 
\center{\includegraphics[width=3.4in]{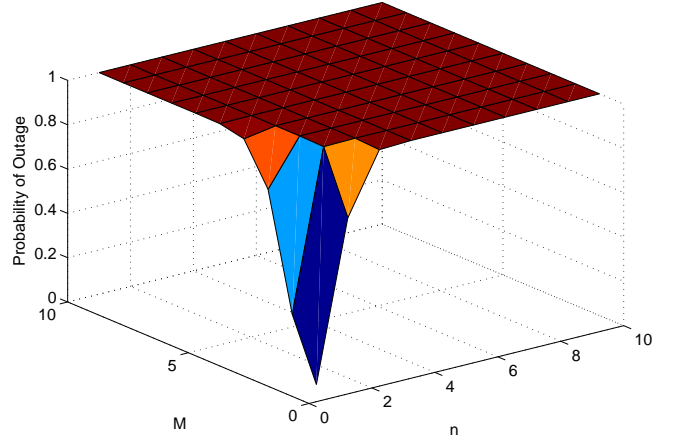}}
\caption{Probability of System Outage with multiple antennas for Rayleigh distribution.} 
\label{fig:probmimoray}
\end{figure}





\section{Simulation Study} \label{sec:simulation}




\begin{figure} [!t] 
\center{\includegraphics[width=3.4in]{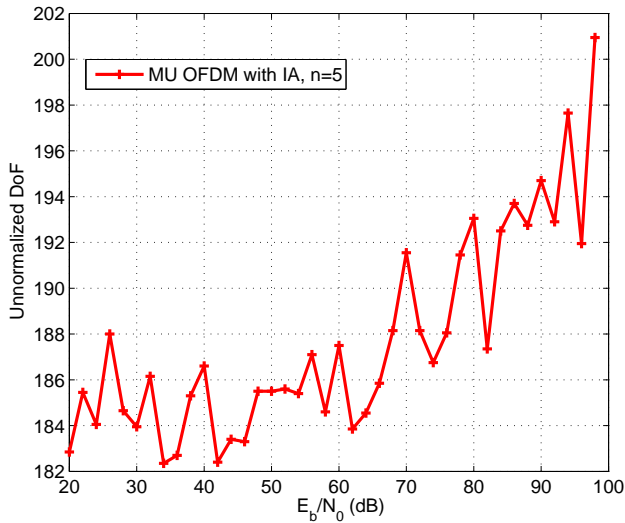}}
\caption{System throughput when $n=5$.} 
\label{fig:res00}
\end{figure}

\begin{figure} [!t] 
\center{\includegraphics[width=3.4in]{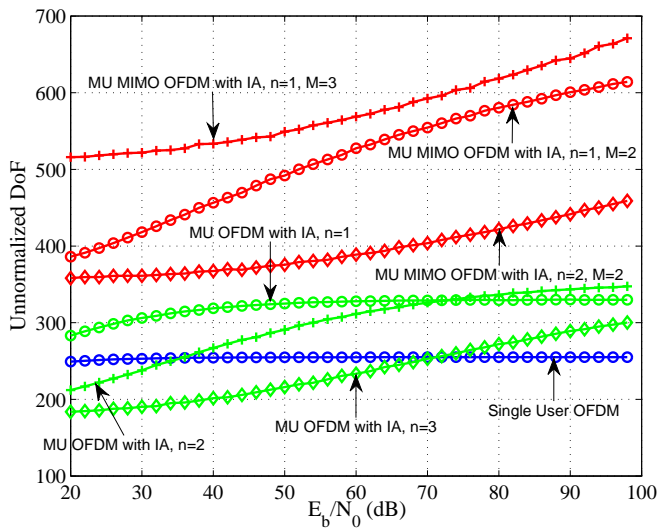}}
\caption{System throughput comparison when the channel variance is large.} 
\label{fig:res1n}
\end{figure}

\begin{figure} [!t] 
\center{\includegraphics[width=3.4in]{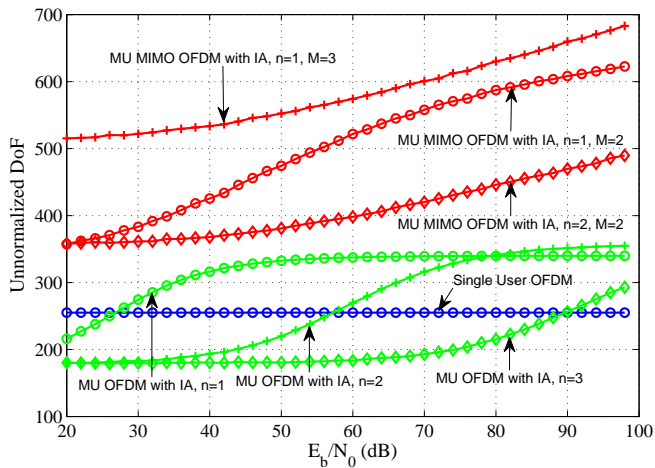}}
\caption{System throughput comparison when the channel variance is small.} 
\label{fig:res4n}
\end{figure}

Simulations are conducted to evaluate the performance of the proposed schemes and verify the benefits brought about by incorporating interference alignment in multi-user OFDM systems. 
We consider the case of $3$ users. The number of subcarriers is $255$. Each transmitter precodes over $(2n+1)M$ subcarriers. Block fading channels are used in the simulations, where channel gains are piece-wise constants for the duration of each time slot drawn from a certain distribution. BPSK is used as the modulation scheme. So we transmit $1$ bit on each subcarrier and we measure how many bits are successfully decoded at the receivers. In this way, we are essentially calculating the number of interference-free channels in the system (we call it {\em unnormalized DoF} hereafter).

Fig.~\ref{fig:res00} depicts the system throughput when $n=5$. We can see that the system performance is unstable. Comparing Fig.~\ref{fig:res00} with Fig.~\ref{fig:res1n} and Fig.~\ref{fig:res4n}, we could tell that since $n$ is too large, the system performance is degraded. This confirms our result that we could not precode over a large amount of subcarriers.

Fig.~\ref{fig:res1n} and Fig.~\ref{fig:res4n} illustrate the performances of different schemes when the channel is drawn from an uniform distribution on $\left[0,1\right]$ and $\left[0.9, 1\right]$, respectively. 
Comparing these two figures, we can see that when the channel variance is small, higher system throughput can be achieved. This conforms to our discussions about the precoding matrix in Section~\ref{subsubsec:another}. It can also be observed that the trends and comparative relationships are similar in Fig. 5 and Fig. 6.

We can see from Fig.~\ref{fig:res4n} that when $n=1$, multiuser OFDM with interference alignment can achieve an unnormalized DoF of  $339.98$. Compared to the highest throughput of single user OFDM of $255$, the DoF has been improved by a factor of approximately $1.33$ by incorporating interference alignment.
When $n=2$, we can see from both figures that the throughput of multiuser OFDM with interference alignment has degraded when the SNR is in the range $\left[0, 78\right]$ dB. That verifies our theorem that under certain power constraint, we can only precode over $3$ subcarriers. Same conclusions also hold for $n=3$ of multiuser OFDM with interference alignment, which exhibits poorer performance in the SNR range of $\left[20,100\right]$ dB.

For the case of multiuser MIMO OFDM with interference alignment, when $n=1$ with small channel variance, the highest unnormalized DoF is $622.7$, which is $2.44$ times of the unnormalized DoF of the single user OFDM system. The reason why it is slightly less than $2.66$ is also due to the big differences among the elements of the precoding matrices. For $n=2$ and $M=2$, we can see that the performance is worse than that of $n=1$ and $M=2$. When the devices are equipped with $3$ antennas, we let $n=1$ and precode over $3$ subcarriers. The highest unnormalized DoFs are $671.2$ and $683.208$ for large and small channel variance cases, respectively, which are $2.63$ and $2.68$ times of that of the single user OFDM system.
However,  the maximum gain is suppose to be 4 times the single user OFDM system. The performance degradation is also due to big difference among the elements of the precoding matrices.

\section{Conclusions} \label{sec:concl}

In this paper, we investigated the problem of how to exploiting interference in OFDM systems. 
We provided an analysis and developed effective schemes on incorporating interference alignment with multi-user (MIMO) OFDM to enhance system throughput. 
With an integer programming formulation, we derived the maximum efficiency for multi-user (MIMO) OFDM/interference alignment systems, and showed how to achieve the maximum efficiency under practical constraints. The performance of the proposed schemes were validated with simulations. The proposed decomposition algorithm and results of this paper may serve as guidance for practical OFDM system design.

%


\end{document}